\title{Distributed PROMPT-LTL Synthesis\thanks{Supported by the projects AVACS (SFB/TR~14), ASDPS (JA~2357/2--1), and TriCS (ZI~1516/1--1) of the German Research Foundation (DFG) and by the European Research Council (ERC) Grant OSARES (No.~683300).}}
\author{Swen Jacobs \qquad\qquad Leander Tentrup \qquad\qquad Martin Zimmermann
\institute{Reactive Systems Group, Saarland University, 66123 Saarbr{\"u}cken, Germany}
\email{lastname@react.uni-saarland.de}
}
\renewcommand{\epsilon}{\varepsilon}
\newcommand{\set}[1]{{\{#1\}}}
\newcommand{\Set}[1]{{\left\{#1\right\}}}
\newcommand{\tuple}[1]{{\langle#1\rangle}}
\renewcommand{\models}{\vDash}
\newcommand{\bigo}{\mathcal{O}}
\newcommand{\pow}[1]{2^{#1}}
\renewcommand{\phi}{\varphi}
\newcommand{\bound}{b}
\newcommand{\card}[1]{\left| {#1} \right|}
\newcommand{\nats}{\mathbb{N}}
\newcommand{\strat}[2]{(2^{#1})^* \rightarrow 2^{#2}}
\newcommand{\proj}{\mathrm{proj}}
\newcommand{\wide}{\mathrm{wide}}
\newcommand{\fun}[2]{#1 \rightarrow #2}
\newcommand{\tsys}{\mathcal{S}} 
\newcommand{\nbw}{\mathcal{N}}
\newcommand{\ucw}{\mathcal{U}}
\newcommand{\uct}{\mathcal{U}_T}
\newcommand{\bwin}{B}
\newcommand{\Bwin}{\mathcal{B}}
\newcommand{\cobwin}{\reflectbox{$B$}}
\newcommand{\true}{\mathbf{tt}}
\newcommand{\false}{\mathbf{ff}}
\renewcommand{\implies}{\mathbin{\rightarrow}}
\DeclareMathOperator\F{\mathbf{F}}
\DeclareMathOperator\G{\mathbf{G}}
\DeclareMathOperator\GF{\mathbf{GF}}
\DeclareMathOperator\FG{\mathbf{FG}}
\DeclareMathOperator\GFp{\mathbf{GF_P}}
\DeclareMathOperator\Fp{\mathbf{F_P}}
\newcommand{\U}{\mathbin{\mathbf{U}}}
\DeclareMathOperator\X{\mathbf{X}}
\newcommand{\R}{\mathbin{\mathbf{R}}}
\newcommand{\rel}{\mathit{rel}}
\newcommand{\alt}{\mathit{alt}}
\newcommand{\ap}{\mathrm{AP}} 
\newcommand{\sched}{\mathit{sched}} 
\newcommand{\Sched}{\mathrm{Sched}}
\newcommand{\ltl}{\mathrm{LTL}}
\newcommand{\pltl}{\mathrm{PLTL}}
\newcommand{\prompt}{\mathrm{PROMPT}$\textendash$\ltl}
\newcommand{\arch}{\mathcal{A}}
\newcommand{\penv}{p_\mathit{env}}
\newcommand{\pminus}{P^-}
\newcommand{\distprod}{\otimes}
\newcommand{\Distprod}{\bigotimes}
\newcommand{\bgraph}{G}   
\newcommand{\cbgraph}{G}  
\newcommand{\nlogspace}{\textsc{NLogSpace}}
\newcommand{\pspace}{\textsc{PSpace}}
\newcommand{\twoexp}{\textsc{2ExpTime}}
\newtheorem{lemma}{Lemma}
\newtheorem{theorem}{Theorem}
\newtheorem{corollary}{Corollary}
\theoremstyle{definition}
\newtheorem{example}{Example}
\newtheorem{remark}{Remark}
\begin{document}
\maketitle

\begin{abstract}
  We consider the synthesis of distributed implementations for specifications in Prompt Linear Temporal Logic ($\prompt$), which extends $\ltl$ by temporal operators equipped with parameters that bound their scope.
  For single process synthesis it is well-established that such parametric extensions do not increase worst-case complexities.
  
  For synchronous systems, we show that, despite being more powerful, the distributed realizability problem for $\prompt$ is not harder than its $\ltl$ counterpart.
  For asynchronous systems we have to consider an assume-guarantee synthesis problem, as we have to express scheduling assumptions. As asynchronous distributed synthesis is already undecidable for $\ltl$, we give a semi-decision procedure for the $\prompt$ assume-guarantee synthesis problem based on bounded synthesis.
\end{abstract}

\section{Introduction}

Linear Temporal Logic~\cite{Pnueli77} ($\ltl$) is the most prominent specification language for reactive systems and the basis for industrial languages like ForSpec~\cite{Forspec02} and PSL~\cite{EisnerFismanPSL}. Its advantages include a compact variable-free syntax and intuitive semantics as well as the exponential compilation property, which explains its attractive algorithmic properties: every $\ltl$ formula can be translated into an equivalent B\"uchi automaton of exponential size. This yields a polynomial space model checking algorithm and a doubly-exponential time algorithm for solving two-player games. Such games solve the monolithic $\ltl$ synthesis problem: given a specification, construct a correct-by-design implementation.

However, $\ltl$ lacks the ability to express timing constraints. For example, the request-response property~$\G(\mathit{req} \rightarrow \F \mathit{resp})$ requires that every request~$\mathit{req}$ is eventually responded to by a $\mathit{resp}$. It is satisfied even if the waiting times between requests and responses diverge, i.e., it is impossible to require that requests are granted within a fixed, but arbitrary, amount of time. While it is possible to encode an a-priori fixed bound for an eventually into $\ltl$, this requires prior knowledge of the system's granularity and incurs a blow-up when translated to automata, and is thus considered impractical.

To overcome this shortcoming of $\ltl$, Alur et al.\ introduced parametric $\ltl$~($\pltl$)~\cite{journals/tocl/AlurETP01}, which extends $\ltl$ with parameterized operators of the form $\F_{\le x}$ and $\G_{\le y}$, where $x$ and $y$ are variables. The formula~$\G(req \rightarrow \F_{\le x}\, resp)$ expresses  that every request is answered within an arbitrary, but fixed, number of steps~$\alpha(x)$. Here, $\alpha$ is a variable valuation, a mapping of variables to natural numbers. Typically, one is interested in whether a $\pltl$ formula is satisfied with respect to some variable valuation, e.g., model checking a transition system~$\tsys$ against a $\pltl$ specification~$\varphi$ amounts to determining whether there is an $\alpha$ such that every trace of $\tsys$ satisfies $\varphi$ with respect to $\alpha$. Alur et al.\ showed that the $\pltl$ model checking problem is $\pspace$-complete. Due to monotonicity of the parameterized operators, one can assume that all variables $y$ in parameterized always operators $\G_{\le y}$ are mapped to zero, as variable valuations are quantified existentially in the problem statements. Dually, again due to monotonicity, one can assume that all variables $x$ in parameterized eventually operators $\F_{\le x}$ are mapped to the same value, namely the maximum of the bounds. Thus, in many cases the parameterized always operators and different variables for parameterized eventually operators are not necessary. 

Motivated by this, Kupferman et al.\ introduced $\prompt$~\cite{journals/fmsd/KupfermanPV09}, which can be seen as the fragment of $\pltl$ without the parameterized always operator and with a single bound~$k$ for the parameterized eventually operators. They proved that $\prompt$ model checking is $\pspace$-complete and solving $\prompt$ games is $\twoexp$-complete, i.e., not harder than $\ltl$ games. While the results of Alur et al.\ rely on involved pumping arguments, the results of Kupferman et al.\ are all based on the so-called alternating color technique, which basically allows to reduce $\prompt$ to $\ltl$. 
Furthermore, the result on $\prompt$ games was extended to $\pltl$ games~\cite{journals/tcs/Zimmermann13}, again using the alternating color technique. 
These results show that adding parameters to $\ltl$ does not increase the asymptotic complexity of the model checking and the game-solving problem, which is still true for even more expressive logics~\cite{journals/corr/FaymonvilleZ14,DBLP:journals/corr/Zimmermann15a}.

The synthesis problems mentioned above assume a setting of complete information, i.e., every part of the system has a complete view on the system as a whole. However, this setting is highly unrealistic in virtually any system.
Distributed synthesis on the other hand, is the problem of synthesizing multiple components with incomplete information.
Since there are specifications that are not implementable, one differentiates synthesis from the corresponding decision problem, i.e., the \emph{realizability} problem of a formal specification.
We focus on the latter, but note that from the methods presented here, implementations are  efficiently extractable from a proof of realizability.

The realizability problem for distributed systems dates back to work of Pnueli and Rosner in the early nineties~\cite{conf/focs/PnueliR90}.
They showed that the realizability problem for LTL becomes undecidable already for the simple architecture of two processes with pairwise different inputs.
In subsequent work, it was shown that certain classes of architectures, like pipelines and rings, can still be synthesized automatically~\cite{conf/lics/KupfermanV01, DBLP:conf/fsttcs/MohalikW03}.
Later, a complete characterization of the architectures for which the realizability problem is decidable was given by Finkbeiner and Schewe by the \emph{information fork} criterion~\cite{conf/lics/FinkbeinerS05}.
Intuitively, an architecture contains an information fork, if there is an information flow from the environment to two different processes where the information to one process is hidden from the other and vice versa.
The distributed realizability problem is decidable for all architectures without information fork.
Beyond decidability results, semi-algorithms like bounded synthesis~\cite{journals/sttt/FinkbeinerS13} give an architecture-independent synthesis method that is particularly well-suited for finding small-sized implementations.

\textbf{Our Contributions.} As mentioned above, one can add parameters to $\ltl$ for free: the complexity of the model checking problem and of solving infinite games does not increase. This raises the question whether this observation also holds for the distributed realizability of parametric temporal logics.
For synchronous systems, we can answer this question affirmatively. 
For every class of architectures with decidable $\ltl$ realizability, the $\prompt$ realizability problem is decidable, too.
To show this, we apply the alternating color technique~\cite{journals/fmsd/KupfermanPV09} to reduce the distributed realizability problem of $\prompt$ to the one of $\ltl$: one can again add parameterized operators to $\ltl$ for~free.

For asynchronous systems, the environment is typically assumed to take over the responsibility for the scheduling decision~\cite{conf/lopstr/ScheweF06}.
Consequently, the resulting schedules may be unrealistic, e.g., one process may not be scheduled at all.
While \emph{fairness} assumptions such as ``every process is scheduled infinitely often'' solve this problem for $\ltl$ specifications, they are insufficient for $\prompt$: a fair scheduler can still delay process activations arbitrarily long and thereby prevent the system from satisfying its $\prompt$ specification for any bound~$k$. \emph{Bounded fair} scheduling, where every process is guaranteed to be scheduled in bounded intervals, overcomes this problem.
Since bounded fairness can be expressed in $\prompt$, the realizability problem in asynchronous architectures can be formulated more generally as an assume-guarantee realizability problem that consists of two $\prompt$ specifications. 
We give a semi-decision procedure for this problem based on a new method for checking emptiness of two-colored B\"uchi graphs~\cite{journals/fmsd/KupfermanPV09} and an extension of bounded synthesis~\cite{journals/sttt/FinkbeinerS13}.
As asynchronous $\ltl$ realizability for architectures with more than one process is undecidable~\cite{conf/lopstr/ScheweF06}, the same result holds for $\prompt$ realizability.
Decidability in the one process case, which holds for $\ltl$~\cite{conf/lopstr/ScheweF06}, is left open. 

All these results also hold for $\pltl$ and even stronger logics~\cite{journals/corr/FaymonvilleZ14,DBLP:journals/corr/Zimmermann15a} to which the alternating color technique is still applicable. 

\textbf{Related Work.}
There is a rich literature regarding the synthesis of distributed systems from global $\omega$-regular specifications~\cite{conf/focs/PnueliR90,conf/lics/KupfermanV01,conf/lics/FinkbeinerS05,conf/fmcad/ChatterjeeHOP13,journals/ipl/Schewe14, DBLP:conf/fsttcs/MohalikW03}.
We are not aware of work that is concerned with the realizability of parameterized logics in this setting.
For local specifications, i.e., specifications that only relate the inputs and outputs of single processes, the realizability problem becomes decidable for a larger class of architectures~\cite{conf/icalp/MadhusudanT01}.
An extension of these results to context-free languages was given by Fridman and Puchala~\cite{journals/acta/FridmanP14}.
The realizability problem for asynchronous systems and LTL specifications is undecidable for architectures with more than one process to be synthesized~\cite{conf/lopstr/ScheweF06}.
Later, Gastin et al.\ showed decidability of a restricted specification language and certain types of architectures, i.e., well-connected~\cite{journals/fmsd/GastinSZ09} and acyclic~\cite{journals/tocl/GastinS13} ones.
Bounded synthesis~\cite{journals/sttt/FinkbeinerS13} provides a flexible synthesis framework that can be used for synthesizing implementations for both the asynchronous and synchronous setting.

\section{Prompt LTL}\label{prompt}

Throughout this work, we fix a set~$\ap$ of atomic propositions. The formulas of $\prompt$
are given by the grammar
\begin{equation*}\phi \Coloneqq a \mid \neg a \mid \phi \wedge \phi \mid \phi \vee
\phi
  \mid \X \phi \mid \phi \U \phi \mid \phi \R \phi \mid
  \Fp \phi
  \enspace,\end{equation*}
where $a \in \ap$ is an atomic proposition, $\neg,\wedge,\vee$ are the usual boolean operators, and $\X, \U, \R$ are $\ltl$ operators next, until, and release. 
We use the derived operators
$\true \coloneqq a \vee \neg a$ and $\false \coloneqq a \wedge \neg a$ for some fixed $a \in \ap$,
and $\F \phi \coloneqq \true \U \phi$ and $\G \phi \coloneqq \false \R \phi$ as usual.
Furthermore, we use $\phi \implies \psi$ as shorthand for $\neg \phi \vee \psi$,
if the antecedent~$\phi$ is a (negated) atomic
proposition (where we identify $\neg \neg a$ with $a$).
We define the size of $\phi$ to be the number of subfomulas of $\phi$.
The satisfaction relation for $\prompt$ is defined between an $\omega$-word~$w = w_0 w_1 w_2 \cdots \in \left( \pow{\ap} \right)^{ \omega }$, a
position~$n$ of $w$, a bound~$k$ for the prompt-eventually operators, and a~$\prompt$ formula. For the $\ltl$ operators, it is defined as usual (and oblivious to $k$) and for the prompt-eventually we have
\begin{itemize}
	\item $(w,n,k)\models\Fp\phi$ if, and only if, there exists a $j$ in
the range
$0\le j \le k$ such that $(w,n+j,k)\models\phi$.

\end{itemize}
For the sake of brevity, we write $(w,k) \models \phi$ instead of
$(w,0,k) \models \phi$ and say that $w$ is a model of $\phi$ with
respect to $k$. Note that $(w, n, k) \models \phi$ implies $(w, n, k') \models \phi$ for every $k' \ge k$, i.e., satisfaction with respect to $k$ is an upwards-closed property.

\paragraph[The Alternating Color Technique]{\bf The Alternating Color Technique.} \label{subsection_altcolor}

In this subsection, we recall the alternating color technique, which 
Kupferman et al.\ introduced to solve model checking, assume-guarantee model checking, and the
realizability problem for $\prompt$ specifications~\cite{journals/fmsd/KupfermanPV09}.

Let $r\notin \ap$ be a fixed fresh proposition. An
$\omega$-word~$w'\in\left(2^{\ap\cup\{r\}}\right)^{\omega}$ is an $r$-coloring of
$w\in\left(2^{\ap}\right)^{\omega}$ if $w_n'\cap \ap=w_n$, i.e., $w_n$ and $w_n'$
coincide on all propositions in $\ap$. The additional proposition~$r$ can be
thought of as the color of $w_n'$: we say
that the \emph{ color changes} at position~$n$, if $n=0$ or if the truth values of $r$ in $w_{n-1}'$
and in $w_n'$ are not equal. In this situation, we say that $n$ is a \emph{change point}.
An \emph{$r$-block} is a maximal infix~$w_m' \cdots w_{n}'$ of $w'$ such that the color changes at $m$ and $n+1$, but not in between. 

Let $k \ge 1$: we say
that $w'$ is \emph{$k$-spaced} if the color changes infinitely often and each
$r$-block has length at least $k$; we say that $w'$ is \emph{$k$-bounded}, if each
$r$-block has length at most $k$. Note that $k$-boundedness implies that the color changes
infinitely often.

Given a $\prompt$ formula~$\phi$, let $\rel_r(\phi)$ denote the formula obtained by inductively replacing every subformula~$\Fp\psi$ by
\begin{equation*}
(r\implies (r\U(\neg r\U \rel_r(\psi))))\wedge(\neg r\implies (\neg r\U(
r\U \rel_r(\psi))))\enspace,
\end{equation*}
which is only linearly larger than $\phi$ and requires every prompt eventually to be satisfied within at most one color change (not counting the position where $\psi$ holds). 
Furthermore, the formula
$\alt_r = \GF r\wedge \GF\neg r$ is satisfied if the colors change infinitely
often. Finally, we define the $\ltl$ formula~$c_r(\phi) = \rel_r(\phi) \wedge \alt_r$.
Kupferman et al.\ showed that $\phi$ and $c_r(\phi)$ are in some sense equivalent on $\omega$-words
which are bounded and spaced. 

\begin{lemma}[Lemma~2.1 of \cite{journals/fmsd/KupfermanPV09}]
\label{lemma_alternatingcolor}
Let $\phi$ be a $\prompt$ formula, and let $w \in \left( \pow{\ap} \right)^{ \omega }$.
\begin{enumerate}
  \item \label{lemma_alternatingcolor_pltltoltl}
If $(w,k)\models \phi$, then $w' \models c_r(\phi)$ for every $k$-spaced $r$-coloring~$w'$ of~$w$.
  
  \item \label{lemma_alternatingcolor_ltltopltl}
If $w'$ is a $k$-bounded $r$-coloring of $w$ with $w' \models c_r(\phi)$, then $(w,2k)\models\phi$.
\end{enumerate}
\end{lemma}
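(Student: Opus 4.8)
The plan is to prove both statements simultaneously by structural induction on $\phi$, strengthening the claim so that it holds at every position $n$ rather than only at the initial one. For part~\ref{lemma_alternatingcolor_pltltoltl} the induction hypothesis reads: if $(w,n,k)\models\psi$, then $(w',n)\models\rel_r(\psi)$ for every $k$-spaced $r$-coloring $w'$; for part~\ref{lemma_alternatingcolor_ltltopltl} it reads: if $w'$ is $k$-bounded and $(w',n)\models\rel_r(\psi)$, then $(w,n,2k)\models\psi$. The conjunct $\alt_r$ of $c_r(\phi)$ is orthogonal to the induction and is dispatched once: in part~\ref{lemma_alternatingcolor_pltltoltl} it holds because $k$-spacedness forces the color to change infinitely often, and in part~\ref{lemma_alternatingcolor_ltltopltl} it is part of the hypothesis (and anyway implied by $k$-boundedness), so the induction only needs to reason about $\rel_r$. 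The base cases $a$ and $\neg a$ are immediate, since $\rel_r$ leaves them unchanged and $w'$ agrees with $w$ on $\ap$; the boolean and $\ltl$ cases ($\wedge,\vee,\X,\U,\R$) are routine, because $\rel_r$ commutes with these operators and their semantics is oblivious both to the bound $k$ and to the fresh proposition $r$, so each implication transfers directly from the subformulas via the hypothesis. The only genuinely new case is $\psi=\Fp\chi$, where the coloring hypotheses enter.

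For part~\ref{lemma_alternatingcolor_pltltoltl}, fix a witness $j\le k$ with $(w,n+j,k)\models\chi$, so that $(w',n+j)\models\rel_r(\chi)$ by the hypothesis. The key combinatorial observation is that, since $w'$ is $k$-spaced, every $r$-block has length at least $k$, and hence within the window $n,\dots,n+j$ of length $j\le k$ the color changes at most once: two change points $m_1<m_2$ in $\{n+1,\dots,n+j\}$ would enclose a block of length $m_2-m_1\ge k$ inside a span of at most $k-1$, a contradiction. A short case distinction on the color at $n$ and on whether a change occurs before $n+j$ then shows that $\rel_r(\chi)$ is reached within at most one color change, which is exactly what the two until-disjuncts defining $\rel_r(\Fp\chi)$ demand.

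For part~\ref{lemma_alternatingcolor_ltltopltl}, assume $(w',n)\models\rel_r(\Fp\chi)$ and, say, that $r$ holds at $n$ (the other case is symmetric), so $r\U(\neg r\U\rel_r(\chi))$ holds at $n$. Unfolding the two until-operators yields offsets $j$ and $i$ with $r$ on $n,\dots,n+j-1$, then $\neg r$ on $n+j,\dots,n+j+i-1$, and $\rel_r(\chi)$ at $n+j+i$. The $r$-positions lie in a single $r$-block and the $\neg r$-positions in a single $\neg r$-block; since $w'$ is $k$-bounded, each block has length at most $k$, so $j\le k$ and $i\le k$, whence $j+i\le 2k$. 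The hypothesis gives $(w,n+j+i,2k)\models\chi$, and since $j+i\le 2k$ we conclude $(w,n,2k)\models\Fp\chi$. This traversal of up to two consecutive blocks is precisely the source of the factor two in the bound.

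I expect the main obstacle to be the bookkeeping in the $\Fp$ case: faithfully translating the nested-until semantics of $\rel_r(\Fp\chi)$ into statements about change points and block lengths, and matching these against the spacing respectively boundedness hypotheses. The asymmetry between the two directions --- at most one change point crossed in part~\ref{lemma_alternatingcolor_pltltoltl} versus up to two full blocks traversed in part~\ref{lemma_alternatingcolor_ltltopltl} --- is exactly what turns the bound $k$ into $2k$, and getting the degenerate until-unfoldings right (notably $j=0$ or $i=0$, where $\rel_r(\chi)$ already holds at the current position) is the delicate part.
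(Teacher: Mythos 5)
Your proof is correct and is essentially the standard argument for this result: the paper itself does not reprove the lemma but imports it as Lemma~2.1 of Kupferman et al., whose proof is exactly this structural induction, with the ``at most one change point crossed under $k$-spacing'' observation for part~1 and the ``at most two blocks of length $\le k$ traversed'' observation for part~2 yielding the factor two. No gaps; the edge cases you flag ($j=0$ or $i=0$ in the until-unfoldings, and the convention that position $0$ is a change point) are handled correctly by restricting attention to actual color flips inside the window.
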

Whenever possible, we drop the subscript~$r$ for the sake of readability, if $r$ is clear from context. However, when we consider asynchronous systems in Section~\ref{sec:asynchronous_distributed_synthesis}, we need to relativize two formulas with different colors, which necessitates the introduction of the subscripts.

\section{Synchronous Distributed Synthesis} \label{sec:synchronous_distributed_synthesis}

$\prompt$ specifications can give guarantees that $\ltl$ cannot, for example by asserting not only that requests to a system are answered \emph{eventually}, but also that there is an \emph{upper bound} on the reaction time.
This is especially important in distributed systems, since such timing constraints become more difficult to implement because of information flows between the various parts of the system.

Consider for example a distributed computation system, where
a central master gets \emph{important} and \emph{unimportant} tasks, and can forward tasks to a number of clients. 
A client can either enqueue the task, which means that it will be processed \emph{eventually}, or clear the client-side queue and process the task immediately. The latter operation is very costly (we have to remember the open tasks as they still need to be completed), but guarantees an upper bound on the completion time.
While in $\ltl$ we can only specify that all incoming tasks are processed eventually, in $\prompt$ we can specify that the answer time to important tasks is bounded by the formula
$\G (\mathit{important\text{-}task} \rightarrow \Fp \mathit{finished\text{-}task})$.\footnote{A similar constraint could be simulated in $\ltl$ by writing that on every important incoming task, the worker queues are cleared. This, however, removes implementation freedom and requires the developer to determine how to implement the feature, instead of letting the synthesis algorithm decide.}

We continue by formalizing the distributed realizability problem.
Let $X$ and $Y$ be finite and pairwise disjoint sets of variables.
A \emph{valuation} of $X$ is a subset of $X$; thus, the set of all valuations of $X$ is $\pow{X}$.
For $w = w_0 w_1 w_2 \cdots \in (\pow{X})^\omega$ and $w'=w'_0 w'_1 w'_2 \cdots \in (\pow{Y})^\omega$, let $w \cup w' = (w_0 \cup w'_0) (w_1 \cup w'_1) (w_2 \cup w'_2) \cdots \in (\pow{X \cup Y})^\omega$. 

\paragraph{Strategies.}

A \emph{strategy} $f \colon \strat{X}{Y}$ maps a history of valuations of $X$ to a valuation of $Y$.
A \emph{$\pow{Y}$-labeled $\pow{X}$-transition system} $\tsys$ is a tuple $\tuple{S,s_0,\Delta,l}$ where
$S$ is a finite set of states,
$s_0 \in S$ is the designated initial state,
$\Delta \colon \fun{S \times \pow{X}}{S}$ is the transition function, and
$l \colon \fun{S}{\pow{Y}}$ is the state-labeling.
We generalize the transition function to sequences over $\pow{X}$ by defining $\Delta^* \colon \fun{(\pow{X})^*}{S}$ recursively as $\Delta^*(\epsilon) = s_0$ and $\Delta^*(w_0 \cdots w_{n-1} w_n) = \Delta(\Delta^*(w_0 \cdots w_{n-1}), w_n)$ for $w_0 \cdots w_{n-1} w_n \in (\pow{X})^+$.
A transition system $\tsys$ \emph{generates} the strategy $f$ if $f(w) = l(\Delta^*(w))$ for every $w \in (\pow{X})^*$.
A strategy $f$ is called \emph{finite-state} if there exists a transition system that generates $f$.

Let $X'$ and $Y'$ be finite and disjoint sets where $X'$ is additionally disjoint from $Y$ and $Y'$ is additionally pairwise disjoint from $X$ and $Y$.
Further, let $f\colon \strat{X}{Y}$ and $f'\colon \strat{X}{Y'}$ be two strategies with the same domain but pairwise different co-domain $\pow{Y}$ and $\pow{Y'}$.
The \emph{product} $f \times f' \colon \strat{X}{Y \cup Y'}$ of $f$ and $f'$ is defined as $(f \times f')(w) = f(w) \cup f'(w)$ for every $w \in (\pow{X})^*$.
The $\pow{X}$-projection of a sequence $w_0 \cdots w_n \in (2^{X \cup X'})^*$ is $\proj_{\pow{X}}(w_0 \cdots w_n) = (w_0 \cap X) \cdots (w_n \cap X) \in (\pow{X})^*$. 
The $\pow{X'}$-widening of a strategy $f \colon \strat{X}{Y}$ is defined as $\wide_{\pow{X'}}(f) \colon \strat{X \cup X'}{Y}$ with $\wide_{\pow{X'}}(f)(w) = f(\proj_{\pow{X}}(w))$ for $w \in (2^{X \cup X'})^*$.
For strategies $f \colon \strat{X}{Y}$ and $f'\colon \strat{X'}{Y'}$, the \emph{distributed product} $f \distprod f' \colon \strat{X \cup X'}{Y \cup Y'}$ is defined as the product $\wide_{2^{X' \setminus X}}(f) \times \wide_{2^{X \setminus X'}}(f')$.

The behavior of a strategy $f\colon \strat{X}{Y}$ is characterized by an infinite tree that branches by the valuations of $X$ and whose nodes $w \in (\pow{X})^*$ are labeled with the strategic choice $f(w)$.
For an infinite word $w = w_0 w_1 w_2 \cdots \in (\pow{X})^\omega$, the corresponding labeled path is defined as $(f(\epsilon) \cup w_0)(f(w_0) \cup w_1)(f(w_0 w_1) \cup w_2)\cdots \in (2^{X \cup Y})^\omega$.
We lift the set containment operator $\in$ to the containment of a labeled path $w = w_0 w_1 w_2 \cdots \in (2^{X \cup Y})^\omega$ in a strategy tree induced by $f \colon \strat{X}{Y}$, i.e., $w \in f$ if, and only if, $f(\epsilon) = w_0 \cap Y$ and $f((w_0 \cap X) \cdots (w_i \cap X)) = w_{i+1} \cap Y$ for all $i \geq 0$.
We define the satisfaction of a $\prompt$ formula $\varphi$ (over propositions $X \cup Y$) on strategy~$f$ with respect to the bound~$k$, written $(f,k) \models \phi$ for short, as $(w,k) \models \varphi$ for all paths $w \in f$.

\paragraph{Distributed Systems.}

We characterize a distributed system as a set of processes with a fixed communication topology, called an \emph{architecture} in the following.
Recall that $\ap$ is the set of atomic propositions used to build formulas.
An \emph{architecture} $\arch$ is a tuple $\tuple{P,\penv,\{I_p\}_{p \in P}, \{O_p\}_{p \in P}}$, where $P$ is the finite set of processes and $p_\mathit{env} \in P$ is the distinct environment process. We denote by $\pminus = P \setminus \set{\penv}$ the set of system 
processes.

Given a process $p \in P$, the inputs and outputs of this process are $I_p \subseteq \ap$ and $O_p \subseteq \ap$, respectively, where we assume~$I_{\penv} = \emptyset$.
We use the notation $I_{P'}$ and $O_{P'}$ for some $P' \subseteq P$ for $\bigcup_{p \in P'} I_p$ and $\bigcup_{p \in P'} O_p$, respectively. While processes may share the same inputs (in case of broadcasting), the outputs of processes must be pairwise disjoint, i.e., for all $p \neq p' \in P$ it holds that $O_p \cap O_{p'} = \emptyset$.

An \emph{implementation} of a process $p \in \pminus$ is a strategy $f_p \colon \strat{I_p}{O_p}$ mapping finite input sequences to a valuation of the output variables.

\begin{example}

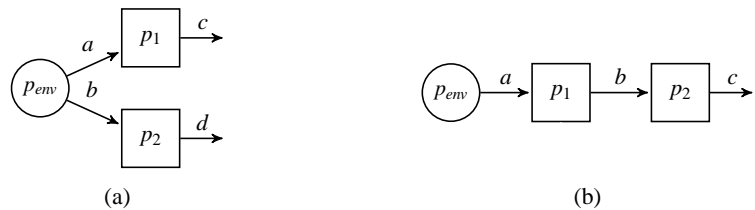
\begin{figure}[h]
\centering
\subfigure[]{
  \begin{tikzpicture}[->,>=stealth',shorten >=1pt,auto,node distance=1cm,semithick,scale=0.8,transform shape]
  
  \tikzstyle{every state}=[shape=rectangle]
  \tikzstyle{envstate}=[shape=circle,scale=0.95]
  
  \node[state,envstate]           (env) {$p_\mathit{env}$};
  \node[state,above right=0 and 1 of env] (P0)  {$p_1$};
  \node[state,below right=0 and 1 of env] (P1)  {$p_2$};
  
  \path (P0.east) edge node  {$c$} +(right:0.75)
        (P1.east) edge node  {$d$} +(right:0.75)
        (env) edge node [pos=0.65]{$a$} (P0)
        (env) edge node [pos=0.2]{$b$} (P1)
        ;


\end{tikzpicture}
  \label{fig:independent_architecture}
}\qquad\qquad\qquad%
\subfigure[]{
  \begin{tikzpicture}[->,>=stealth',shorten >=1pt,auto,node distance=1cm,semithick,scale=0.8,transform shape]
  
  \tikzstyle{every state}=[shape=rectangle]
  \tikzstyle{envstate}=[shape=circle,scale=0.95]
  
  \node[state,envstate] (env)      {$p_\mathit{env}$};
  \node[state] (P0) [right=0.85 of env] {$p_1$};
  \node[state] (P1) [right=of P0]  {$p_2$};
  
  \path (env) edge node {$a$} (P0)
        (P0) edge node {$b$} (P1)
        (P1.east) edge node {$c$} +(right:0.75)
        ;
  
  \path[white] (P0.south) edge node {} +(down:0.75);

\end{tikzpicture}
  \label{fig:pipeline_architecture}
}
\caption[]{Example architectures}
\label{fig:architectures}
\end{figure}

Figure~\ref{fig:architectures} shows example architectures $\arch_1$ and $\arch_2$, where
\begin{align*}
  \arch_1 ={} & \tuple{
    \set{\penv, p_1, p_2},
    \penv,
    \set{ \penv \to \emptyset, p_1 \to \set{a}, p_2 \to \set{b} },
    \set{ \penv \to \set{a, b}, p_1 \to \set{c}, p_2 \to \set{d} }
  }, \text{ and} \\
  \arch_2 ={} & \tuple{
    \set{\penv, p_1, p_2},
    \penv,
    \set{ \penv \to \emptyset, p_1 \to \set{a}, p_2 \to \set{b} },
    \set{ \penv \to \set{a}, p_1 \to \set{b}, p_2 \to \set{c} }
  } \enspace.
\end{align*}
The architecture $\arch_1$ in Fig.~\ref{fig:independent_architecture} contains two system processes, $p_1$ and $p_2$, and the environment process $\penv$.
The processes $p_1$ and $p_2$ receive the inputs~$a$, respectively $b$, from the environment and output $c$ and $d$, respectively.
Hence, the environment can provide process $p_1$ with information that is hidden from $p_2$ and vice versa.
In contrast, architecture $\arch_2$, depicted in Fig.~\ref{fig:pipeline_architecture}, is a pipeline architecture where information from the environment can only propagate through the pipeline processes $p_1$ and $p_2$.

\end{example}

\paragraph{Distributed Realizability.}

Let $\arch = \tuple{P,\penv,\{I_p\}_{p \in P}, \{O_p\}_{p \in P}}$ be an architecture.
The \emph{distributed realizability problem for $\arch$} is to decide, given a $\prompt$ formula $\varphi$, whether there exist a bound $k$ and a finite-state implementation $f_p$ for every process $p \in P^-$, such that the distributed product $\Distprod_{p \in P^-} f_p$ satisfies $\varphi$ with respect to $k$, i.e., $(\Distprod_{p \in P^-} f_p,k) \models \varphi$. In this case, we say that $\phi$ is realizable in $\arch$.
The distributed realizability problem for $\ltl$ is a special case, as $\ltl$ is a fragment of $\prompt$.

Let $r \notin \ap$ be the fresh proposition introduced for the alternating color technique to relativize formulas and let $\arch = \tuple{P,\penv,\{I_p\}_{p \in P}, \{O_p\}_{p \in P}}$ be an architecture as above. We define the architecture $\arch^r$ as $\tuple{P \cup \set{p_r},\penv,\{I_p\}_{p \in P} \cup \set{I_r},\{O_p\}_{p \in P} \cup \set{O_r}}$, where $I_r = \emptyset$ and $O_r = \set{r}$.
Intuitively, this describes an architecture where one additional process $p_r$ is responsible for providing sequences in $(\pow{\set{r}})^\omega$, i.e., a coloring by $r$. We show that $\varphi$ in $\arch$ and $c_r(\varphi)$  in $\arch^r$ are equi-realizable by applying the alternating color technique. As the processes are synchronized, the proof is similar to the one for the single-process case by Kupferman et al.~\cite{journals/fmsd/KupfermanPV09}.

\begin{theorem}
  A $\prompt$ formula $\varphi$ is realizable in $\arch$ if, and only if, $c_r(\varphi)$ is realizable in $\arch^r$.
\end{theorem}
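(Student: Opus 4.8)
The plan is to prove both directions by transporting the finite-state implementations between $\arch$ and $\arch^r$ and invoking the two halves of Lemma~\ref{lemma_alternatingcolor}. The only feature that distinguishes this from the single-process argument of Kupferman et al.\ is that here the coloring must be produced by the dedicated process~$p_r$; since $I_r=\emptyset$, that process has no information to react to and emits one fixed $\omega$-word on every play, which is exactly what keeps the synchronous argument essentially unchanged.

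For the implication from left to right I would start from a bound $k\ge 1$ and finite-state implementations $(f_p)_{p\in\pminus}$ with $(\Distprod_{p\in\pminus} f_p,k)\models\varphi$; assuming $k\ge1$ is harmless since satisfaction is upwards-closed in~$k$. I keep all $f_p$ unchanged — this is legal in $\arch^r$ because no process reads $r$, so the input sets $I_p$ are identical in $\arch$ and $\arch^r$ — and add one implementation $f_{p_r}$ that flips $r$ every $k$ steps. As $p_r$ has no inputs, $f_{p_r}$ is a finite-state strategy over the singleton alphabet $\pow{\emptyset}$ realized by a cyclic counter with $\bigo(k)$ states, and the word it emits is $k$-spaced. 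Every path of $(\Distprod_{p\in\pminus} f_p)\distprod f_{p_r}$ is then precisely a $k$-spaced $r$-coloring $w'$ of some path $w$ of $\Distprod_{p\in\pminus} f_p$, so $(w,k)\models\varphi$ together with Lemma~\ref{lemma_alternatingcolor}.\ref{lemma_alternatingcolor_pltltoltl} gives $w'\models c_r(\varphi)$. Holding on every path, this witnesses realizability of $c_r(\varphi)$ in~$\arch^r$.

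For the converse I would take finite-state implementations realizing $c_r(\varphi)$ in $\arch^r$, single out the one $f_{p_r}$ of $p_r$, and retain the remaining ones, which are again legal implementations in~$\arch$. The key step, which I expect to be the main obstacle, is to recover a \emph{uniform} block bound from the merely semantic guarantee $\alt_r$. Because $I_r=\emptyset$, the transition system generating $f_{p_r}$ runs over the singleton input alphabet, so its state sequence is ultimately periodic and $f_{p_r}$ emits a single ultimately periodic word $u\in(\pow{\set{r}})^\omega$, independent of the environment. The conjunct $\alt_r$ of $c_r(\varphi)$ forces the color to change infinitely often on every path, hence in~$u$; an ultimately periodic word with infinitely many change points has block lengths bounded by some finite~$k$, so $u$ — and therefore every path $w'$ of the product — is a $k$-bounded $r$-coloring.

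Finally I would apply Lemma~\ref{lemma_alternatingcolor}.\ref{lemma_alternatingcolor_ltltopltl}: from $w'\models c_r(\varphi)$ and $k$-boundedness we obtain $(w,2k)\models\varphi$ for the $\ap$-projection $w$ of $w'$. Since the $\ap$-projections of the paths of the $\arch^r$-product are exactly the paths of $\Distprod_{p\in\pminus} f_p$ in~$\arch$, this establishes $(\Distprod_{p\in\pminus} f_p,2k)\models\varphi$ with the single uniform bound $2k$, i.e.\ realizability of $\varphi$ in~$\arch$. The whole argument turns on $p_r$ being input-free and finite-state, which converts the assumption $\alt_r$ into a concrete block bound; the remainder is bookkeeping about the projections and widenings hidden inside the distributed product $\distprod$.
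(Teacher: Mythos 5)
Your proposal is correct and follows essentially the same route as the paper's proof: the left-to-right direction adds an input-free finite-state process emitting a $k$-spaced coloring and applies Lemma~\ref{lemma_alternatingcolor}.\ref{lemma_alternatingcolor_pltltoltl}, while the converse exploits that $f_{p_r}$ is input-free and finite-state, hence emits a single ultimately periodic word whose blocks are uniformly bounded once $\alt_r$ forces infinitely many change points, and then applies Lemma~\ref{lemma_alternatingcolor}.\ref{lemma_alternatingcolor_ltltopltl} to get the bound $2k$. Your explicit remark that $k\ge 1$ may be assumed by upward closure is a small additional care the paper leaves implicit, but otherwise the arguments coincide.
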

\begin{proof}%
  Let $\arch = \tuple{P,\penv,\{I_p\}_{p \in P}, \{O_p\}_{p \in P}}$ be an architecture and $\varphi$ be a $\prompt$ formula.
  
  Assume that the $\prompt$ formula $\varphi$ is realizable in $\arch$.
  Then, there exist finite-state strategies $f_p$ for $p \in P^-$ and a bound $k$ satisfying the $\prompt$ distributed realizability problem $\tuple{\arch,\varphi}$.
  For every $w \in \Distprod_{p \in P^-} f_p$, it holds that $(w,k) \models \varphi$.
  By Lemma~\ref{lemma_alternatingcolor}.\ref{lemma_alternatingcolor_pltltoltl} it holds that every $k$-spaced $r$-coloring $w'$ of $w$ satisfies $c_r(\varphi)$.
  Let $f_r \colon (2^\emptyset)^* \rightarrow 2^\set{r}$ be a (finite-state) strategy that produces the $k$-spaced sequence $(\emptyset^k \set{r}^k)^\omega$.
  Then, the process implementations $\set{f_p}_{p \in P^-}$ together with $f_r$ are a solution to the $\ltl$ distributed realizability problem $\tuple{\arch^r,c_r(\varphi)}$.

  Now, assume that the $\ltl$ formula $c_r(\varphi)$ is realizable in the architecture~$\arch^r$.
  Thus, there exist finite-state strategies $f_p$ for $p \in P^-$ and a finite-state strategy $f_r$ for process~$p_r$. Note that the strategy~$f_r \colon \strat{\emptyset}{\set{r}}$ has a unique output~$w_r \in (\pow{{\set{r}}})^\omega$, as it has no inputs. We claim that $w_r$ is $k$-bounded, where $k$ is the number of states of the transition system~$\tsys = \tuple{S,s_0,\Delta,l}$ generating $f_r$. To see this, note that $f_r$ has no inputs, i.e., every state of $\tsys$ has a unique successor in $\Delta$, and the unique run of $\tsys$ on $\emptyset^\omega$ ends up in a loop which is traversed ad infinitum. As the output~$w_r$ has infinitely many change points, the loop contains at least one state~$s$ labeled by $l(s) = \emptyset$ and at last one state~$s'$ with $l(s') = \set{r}$. Thus, the maximal length of a block of $w_r$ is bounded by the length of the loop, which in turn is bounded by the size of $\tsys$. 

  Hence, for every $w \in \Distprod_{p \in P^-} f_p$, the word $w_r \cup w$ is a $k$-bounded $r$-coloring of $w$ with $w_r \cup w \models \rel_r(\varphi)$.
  By Lemma~\ref{lemma_alternatingcolor}.\ref{lemma_alternatingcolor_ltltopltl}, for all such $w$ it holds that $(w,2k) \models \varphi$.
  Hence, $\set{f_p}_{p \in P^-}$ together with the bound~$2k$ is a solution to the $\prompt$ distributed realizability problem.
\end{proof}

To conclude, we show that the newly introduced process $p_r$ preserves the \emph{information fork} criterion~\cite{conf/lics/FinkbeinerS05}.
Formally, consider tuples $\tuple{P', V', p, p'}$, where $P'$ is a subset of the processes, $V'$ is a subset of the variables disjoint from $I_p \cup I_{p'}$, and $p,p' \in  P^- \setminus P'$ are two different processes.
Such a tuple is an information fork in $\arch$ if $P'$ together with the edges that are labeled with at least one variable from $V'$ forms a sub-graph of $\arch$ rooted in the environment and there exist two nodes $q, q' \in P'$ that have edges to $p, p'$, respectively, such that $O_\set{q,p} \nsubseteq I_{p'}$ and $O_\set{q',p'} \nsubseteq I_p$.
For example, the architecture in Fig.~\ref{fig:independent_architecture} contains the information fork $(\set{\penv}, \emptyset, p_1, p_2)$, while the pipeline architecture depicted in Fig.~\ref{fig:pipeline_architecture} does not contain an information fork.

\begin{lemma}
  $\arch^r$ contains an information fork if, and only if, $\arch$ contains an information fork.
\end{lemma}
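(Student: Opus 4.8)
The plan is to establish both directions by exploiting that the fresh process~$p_r$ is an \emph{isolated} node in the communication graph of $\arch^r$. Indeed, since $I_r = \emptyset$, no variable is an input of $p_r$, so $p_r$ has no incoming edges; and since $O_r = \set{r}$ with $r \notin \ap$, while every original input set satisfies $I_p \subseteq \ap$, the color~$r$ is an input of no process, so $p_r$ has no outgoing edges either. Moreover, adding $p_r$ leaves the inputs and outputs of all original processes, and hence all edges among them, unchanged. These observations drive the whole argument.

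For the direction from $\arch$ to $\arch^r$, I would show that any information fork $\tuple{P', V', p, p'}$ of $\arch$ is already an information fork of $\arch^r$. All four components stay the same, so it suffices to note that none of the defining conditions is affected: the edges among $P' \subseteq P$ that are labeled by some variable in $V'$ are identical in $\arch$ and $\arch^r$, so the induced subgraph is still rooted in $\penv$; the processes $p, p' \in \pminus$ remain system processes of $\arch^r$ and stay outside $P'$; and the outputs $O_\set{q,p}, O_\set{q',p'}$ and inputs $I_p, I_{p'}$ witnessing the forking conditions are the same in both architectures.

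For the converse, let $\tuple{P', V', p, p'}$ be an information fork of $\arch^r$ and put $P'' = P' \setminus \set{p_r}$ and $V'' = V' \setminus \set{r}$; I claim $\tuple{P'', V'', p, p'}$ is an information fork of $\arch$. The isolation of $p_r$ does the work. Since the witnesses $q, q'$ have outgoing edges to $p, p'$ and, dually, $p, p'$ have incoming edges from $q, q'$, none of $p, p', q, q'$ can equal the isolated $p_r$; hence $p, p' \in \pminus$ and $q, q' \in P''$, and the non-containment conditions $O_\set{q,p} \nsubseteq I_{p'}$ and $O_\set{q',p'} \nsubseteq I_p$ carry over verbatim, as the relevant input/output sets are unchanged. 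Because $r$ labels no edge, the subgraph induced by the $V'$-labeled edges coincides with the one induced by the $V''$-labeled edges; and since an isolated $p_r$ cannot be reached from $\penv$, its presence in $P'$ is incompatible with the subgraph being rooted in the environment, so in fact $p_r \notin P'$ and $P'' = P'$ on the process side. Thus the subgraph is rooted in $\penv$ in $\arch$ as well, and $V'' \subseteq \ap$ remains disjoint from $I_p \cup I_{p'}$.

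The only genuinely delicate point is the backward direction's treatment of the ``rooted in the environment'' requirement: one must rule out $p_r$ participating in the fork at all. This is exactly where the two defining features of $\arch^r$ --- namely $I_r = \emptyset$ (no incoming edges) and $O_r = \set{r}$ with $r \notin \ap$ (no outgoing edges) --- are essential; everything else reduces to checking that the input/output data of the original processes, on which the forking conditions depend, is preserved.
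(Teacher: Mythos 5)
Your proof is correct and follows essentially the same route as the paper: the forward direction is immediate since adding $p_r$ changes nothing about the original processes, and the backward direction hinges on $p_r$ being isolated ($I_r = \emptyset$ gives no incoming edges, and $r \notin \ap$ gives no outgoing edges), which forces $p_r \notin \set{p, p'}$, $p_r \notin P'$, and renders $r$ irrelevant to the fork. Your treatment is slightly more explicit than the paper's (e.g., in removing $r$ from $V'$ rather than just asserting $r \notin V'$), but the argument is the same.
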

\begin{proof}
  The \emph{if} direction follows immediately by construction: if $\tuple{P', V', p, p'}$ is an information fork in $\arch$ then it is an information fork in $\arch^r$ as well.
  Hence, assume $\tuple{P', V', p, p'}$ is an information fork in $\arch^r$.
  It holds that neither $p_r = p$ nor $p_r = p'$ since $p_r$ has no incoming edges.
  As $I_{p_r} = \emptyset$, $p_r$ cannot be in a sub-graph that is rooted in the environment, hence, $p_r \notin P'$ and $r \notin V'$.
  It follows that $\tuple{P', V', p, p'}$ is an information fork in $\arch$.
\end{proof}

Thus, we can use well-known results for the decidability of distributed realizability for $\ltl$ and weakly ordered architectures~\cite{conf/lics/FinkbeinerS05}, i.e., those without an information fork.

\begin{corollary}
  Let $\arch$ be an architecture.
  The $\prompt$ distributed realizability problem for $\arch$ is decidable if, and only if, $\arch$ is weakly ordered.
\end{corollary}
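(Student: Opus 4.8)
The plan is to combine three ingredients: the equi-realizability theorem proved above, the information-fork preservation lemma just established, and the Finkbeiner--Schewe characterisation cited in the paper, which states that $\ltl$ distributed realizability for an architecture is decidable if, and only if, that architecture is weakly ordered (contains no information fork). Once these are in place, both directions of the corollary reduce to short arguments, one by reducing the $\prompt$ problem to the $\ltl$ problem on $\arch^r$, and the other by exploiting that $\ltl$ is a fragment of $\prompt$.

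For the \emph{if} direction, I would assume that $\arch$ is weakly ordered. By the preservation lemma, $\arch^r$ then contains no information fork either and is therefore weakly ordered as well, so $\ltl$ distributed realizability for $\arch^r$ is decidable by the Finkbeiner--Schewe characterisation. Given a $\prompt$ formula $\varphi$, I would compute the $\ltl$ formula $c_r(\varphi)$, which is only linearly larger than $\varphi$, and run the $\ltl$ decision procedure on the instance $\tuple{\arch^r, c_r(\varphi)}$. By the equi-realizability theorem, $\varphi$ is realizable in $\arch$ if, and only if, $c_r(\varphi)$ is realizable in $\arch^r$, so this yields a decision procedure for $\prompt$ realizability in $\arch$.

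For the \emph{only if} direction I would argue by contraposition: if $\arch$ is not weakly ordered, I claim that $\prompt$ realizability for $\arch$ is undecidable. Since $\ltl$ is a fragment of $\prompt$ and $\ltl$ satisfaction is oblivious to the bound $k$, the $\prompt$ realizability problem restricted to $\ltl$ formulas coincides exactly with $\ltl$ distributed realizability, as already noted in the paper. Hence any decision procedure for $\prompt$ realizability in $\arch$ would decide $\ltl$ realizability in $\arch$ as a special case. But by the Finkbeiner--Schewe characterisation the latter is undecidable whenever $\arch$ contains an information fork, so $\prompt$ realizability for $\arch$ cannot be decidable either.

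I do not expect a genuine obstacle here, since the substantive work has been front-loaded into the equi-realizability theorem and the fork-preservation lemma. The only points requiring care are that both transformations are effective — $c_r(\varphi)$ is clearly computable from $\varphi$ and $\arch^r$ from $\arch$ — and that in the undecidability direction the reduction runs in the correct logical direction, namely that decidability of the more general $\prompt$ problem would imply decidability of the $\ltl$ special case, rather than the reverse.
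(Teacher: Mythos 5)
Your proposal is correct and follows exactly the route the paper intends for this corollary: combine the equi-realizability theorem, the information-fork preservation lemma, the fact that $\ltl$ is a fragment of $\prompt$, and the Finkbeiner--Schewe characterisation of decidable architectures. Both directions are handled as the paper's surrounding text indicates, so there is nothing to add.
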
\noindent
Furthermore, we can directly apply semi-algorithms for the distributed realizability problem, such as bounded synthesis~\cite{journals/sttt/FinkbeinerS13}, to effectively construct small-sized solutions.

\section{Asynchronous Distributed Synthesis} \label{sec:asynchronous_distributed_synthesis}

The asynchronous system model is a generalization of the synchronous model discussed in the last section.
In an asynchronous system, not all processes are scheduled at the same time.
We model the scheduler as part of the environment, i.e., at any given time the environment additionally signals whether a process is enabled.
The resulting distributed realizability problem is already undecidable for $\ltl$ specifications and systems with more than one process~\cite{conf/lopstr/ScheweF06}.

We have to adapt the definition of the $\prompt$ realizability problem for the asynchronous setting.
Using the definition from Section~\ref{sec:synchronous_distributed_synthesis}, the system can never satisfy a $\prompt$ formula if the scheduler is part of the environment, since it may delay scheduling indefinitely. Moreover, even if the scheduler is assumed to be fair, it can still build increasing delay blocks between process activation times, such that it is impossible for the system to guarantee any bound $k \in \nats$.
Hence, we employ the concept of \emph{bounded fair} schedulers and allow the system valuations to depend on the scheduler bound.
More generally, this is a typical instance of an assume-guarantee specification: under the assumption that the scheduler is bounded fair, the system satisfies its specification.
In the following, we formally introduce the distributed realizability problem for asynchronous systems and assume-guarantee specifications.

To model scheduling, we introduce an additional set $\Sched = \set{\sched_p \mid p \in \pminus}$ of atomic propositions. The valuation of $\sched_p$ indicates whether system process~$p$ is currently scheduled or not.
Given a (synchronous) architecture $\arch = \tuple{P,\penv,\{I_p\}_{p \in P}, \{O_p\}_{p \in P}}$, we define the asynchronous architecture $\arch^*$ as the architecture with the environment output $O^*_{\penv} = O_{\penv} \cup \Sched$.
Furthermore, we extend the input $I_p$ of a process by its scheduling variable $\sched_p$, i.e., $I_p^* = I_p \cup \set{\sched_p}$ for every $p \in P^-$.
The environment can decide in every step which processes to schedule.
When a process is not scheduled, its \emph{state}---and thereby its outputs---do not change~\cite{journals/sttt/FinkbeinerS13}.
Formally, let $f_p$ for $p \in P^-$ be a finite-state implementation for a process~$p$ and $\tsys_p = \tuple{S,s_0,\Delta,l}$ a transition system that generates~$f_p$.
For every path $w = w_0 w_1 w_2 \cdots \in (\pow{I_p^*})^\omega$ it holds that if $\sched_p \notin w_i$ for some $i \in \nats$, then $\Delta^*(w[i]) = \Delta^*(w[i+1])$, where $w[i]$ denotes the prefix~$w_0 w_1 \cdots w_{i}$ of $w$.

A $\prompt$ assume-guarantee specification~$\tuple{\varphi,\psi}$ consists of a pair of $\prompt$ formulas. The asynchronous assume-guarantee realizability problem asks, given an asynchronous architecture~$\arch^*$ and $\tuple{\varphi,\psi}$ as above, whether there exists a finite-state implementation~$f_p$ for every process $p \in P^-$ such that for every bound~$k$ there is a bound $l$ such that for every $w \in \Distprod_{p \in P^-} f_p$, we have that $(w, k) \models \varphi$ implies $(w, l) \models \psi$.
In this case, we say that $\Distprod_{p \in P^-} f_p$ satisfies $\tuple{\varphi,\psi}$.

Consider the bounded fairness specification discussed above, which is expressed by the formula~$\varphi = \bigwedge_{p \in \pminus} \GFp \sched_p$, i.e., for every point in time, every $p$ is scheduled within a bounded number of steps.
That is, we use $\varphi$ as an assumption on the environment which implies that the guarantee $\psi$ only has to be satisfied if $\varphi$ holds.
Consider for example the asynchronous architecture corresponding to Fig.~\ref{fig:independent_architecture} and the $\prompt$ specification $\psi = \G (\Fp c \land \Fp \neg c \land \Fp d \land \Fp \neg d)$.
Even when we assume a fair scheduler, i.e., $\varphi = \GF \sched_{p_1} \land \GF \sched_{p_2}$, the environment can prevent one process from satisfying the specification for any bound $l$.
This problem is fixed by assuming the scheduler to be bounded fair, i.e., $\varphi = \GFp \sched_{p_1} \land \GFp \sched_{p_2}$.
Then, there exist realizing implementations for processes $p_1$ and $p_2$ (that alternate between enabling and disabling the output), and the bound on the guarantee is $l = 2 \cdot k$ for every bound~$k$.

Unlike $\ltl$, where the assume-guarantee problem $\tuple{\varphi,\psi}$ can be reduced to the $\ltl$ realizability problem for the implication $\varphi \rightarrow \psi$, this is not possible in $\prompt$ due to the  quantifier alternation on the bounds.
Indeed, it is still open whether the $\prompt$ assume-guarantee realizability problem in the single-process case is decidable.
We show that even if the problem turns out to be decidable, an implementation that realizes the specification may need in general infinite memory.

\begin{lemma}
   There exists an assume-guarantee $\prompt$ specification that can be realized with an infinite-state strategy, but not with a finite-state strategy.
\end{lemma}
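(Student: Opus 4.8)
The plan is to exhibit a single system process---matching the open single-process case emphasized above---together with an assume-guarantee pair $\tuple{\varphi,\psi}$ whose realization forces the implementation to \emph{measure} an unbounded feature of the input and to scale its response accordingly. Concretely, I would work in the asynchronous architecture $\arch^*$ over $\pminus=\set{p}$ with $I_p=\set{a}$ and $O_p=\set{c}$, and design $\varphi$ so that, under bound $k$, the environment may present an input in which a distinguished phase has arbitrary length $n\le k$; for instance $\varphi=\Fp\G\neg a$ (conjoined with the standard bounded-fairness assumption $\GFp\sched_p$), so that all occurrences of $a$ lie in an initial block whose length $n$ is bounded by $k$. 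The guarantee $\psi$ is engineered so that, in order to satisfy it on such an input with \emph{any} bound $l$, the implementation must emit a response that depends on $n$---intuitively, it must acknowledge the whole phase, so that the earliest bound witnessing $\psi$ grows with $n$. The separating phenomenon is then that an infinite-state strategy can count $n$ and answer in $\bigo(n)$ steps, whereas a finite-state strategy cannot retain $n$ and hence behaves identically on infinitely many inputs of different phase length.

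For the positive direction I would construct the realizing implementation explicitly as an (infinite-state) counter: while reading the $a$-block it increments a counter recording $n$, and after the block it produces the $n$-dependent acknowledgment pattern demanded by $\psi$. I would then verify the quantifier structure $\forall k\,\exists l\,\forall w$ directly: whenever a path $w\in\Distprod_{p\in\pminus}f_p$ satisfies $\varphi$ with bound $k$ we have $n\le k$, the counter answers within an explicit function of $n$, and so $(w,l)\models\psi$ holds for $l$ set to the corresponding explicit function of $k$ (e.g.\ $l=2k$). Upward-closure of PROMPT satisfaction in the bound, noted after the semantics, makes this last step routine.

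For the negative direction I would argue by a pigeonhole/pumping argument on the size of a purported finite-state implementation. Fix a transition system $\tsys$ with state set $S$ generating a candidate $f_p$. Reading the $a$-block drives $\tsys$ through a lasso over $S$, so its state after $a^n$ is eventually periodic in $n$: there are $n\neq n'$, both larger than $\card{S}$, on which $\tsys$ reaches the \emph{same} state at the end of the block. Consequently $f_p$ produces identical outputs after the block on the two inputs, and its earliest witness for $\psi$ is bounded by a constant $b_{\tsys}$ independent of the phase length. Choosing $k$ so large that the response length required for some admissible input with $n\le k$ exceeds $b_{\tsys}$, I obtain, for every candidate bound $l$, a path $w$ with $(w,k)\models\varphi$ but $(w,l)\nmodels\psi$; hence no $l$ works for this $k$, so $f_p$ does not realize $\tuple{\varphi,\psi}$. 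As $\tsys$ was arbitrary, no finite-state implementation realizes the specification.

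I expect the main obstacle to be the design of $\psi$ itself rather than either direction of the argument. PROMPT-LTL natively expresses \emph{upper} bounds on waiting times via $\Fp$, whereas the construction needs the guarantee to effectively enforce a response whose length is \emph{lower}-bounded by the unbounded input feature $n$. The delicate point is therefore to realize this lower bound without leaving the logic---by exploiting the interplay between the bound $k$ in the assumption and the alternation $\forall k\,\exists l$, rather than by a single formula asserting a count, which would not be $\omega$-regular. Once $\psi$ is pinned down so that (i) the counter strategy provably meets it with $l$ a fixed function of $k$ and (ii) any response bounded independently of $n$ provably violates it, both directions close as sketched; verifying these two properties of $\psi$ is where the real work lies.
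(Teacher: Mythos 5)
There is a genuine gap: you never exhibit the specification. Your entire argument hinges on a guarantee $\psi$ that forces the implementation's response to be \emph{lower}-bounded by the length $n$ of an input phase, and you candidly note that designing such a $\psi$ "is where the real work lies" --- but that design is precisely the content of the lemma. The difficulty you identify is real and, on your chosen route, likely fatal: $\prompt$ guarantees, like $\ltl$ ones, assert upper bounds on waiting times, and a property of the form "the acknowledgment block is at least as long as the input block" is not expressible. Without a concrete $\psi$, neither your positive direction (the counter strategy satisfies $\psi$ with $l$ a function of $k$) nor your negative pigeonhole direction (two phase lengths collapsing to the same state violates $\psi$) can be checked; the pigeonhole step in particular is vacuous until one knows what identical post-block behavior is supposed to contradict.

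The paper resolves exactly this obstacle with a trick that your sketch misses: it sets $\psi = \false$, so the implementation must \emph{falsify} the assumption for every $k$, and falsifying an $\Fp$-formula is a lower-bound requirement. Concretely, it takes a single process with $I = \emptyset$, $O = \set{o}$, and $\varphi = \GFp o \lor \FG \neg o$. To falsify $\varphi$ for every $k$ the system must emit $o$ infinitely often yet leave gaps $\emptyset^k$ for every $k$; an infinite-state strategy produces ever-growing gaps, while a finite-state strategy with no inputs generates an ultimately periodic word and hence bounded gaps. Note that no input and no measuring is needed at all --- the infinite memory is forced by the system having to spontaneously generate unboundedly long blocks, not by remembering a quantity the environment supplies. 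If you want to salvage your approach, the lesson is to move the $\Fp$ into a position where the system is obliged to \emph{violate} it; as written, the proposal is a plan whose key component is missing.
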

\begin{proof}
  Consider the assume-guarantee specification $\tuple{\varphi,\psi}$ with $\varphi = \GFp o \lor \FG \neg o$ and $\psi = \false$ and a single process architecture with $I = \emptyset$ and $O = \set{o}$.
  As the guarantee $\psi$ is false, the implementation has to falsify the assumption $\varphi$ for every bound $k$ on the prompt-eventually operator to realize $\tuple{\varphi,\psi}$.
  To falsify $\varphi$ with respect to $k$, the implementation has to produce a sequence $w \in (2^\set{o})^\omega$ where $o$ is repeatedly true and where $\emptyset^k$ is an infix  of $w$.
  Thus, the size of the implementation depends on $k$ and an implementation that falsifies $\varphi$ for every $k$ must have infinite memory.
\end{proof}
Since the $\ltl$ realizability problem is undecidable and implementations for $\prompt$ assume-guarantee specifications may need infinite memory, the $\prompt$ assume-guarantee realizability problem for asynchronous architectures may be at best solvable by a semi-decision procedure.
We present such a semi-algorithm for the asynchronous distributed realizability problem for assume-guarantee $\prompt$ specifications based on bounded synthesis~\cite{journals/sttt/FinkbeinerS13}.
In bounded synthesis, a transition system of a fixed size is ``guessed'' and model checked by a constraint solver.
Model checking for $\prompt$ can be solved by checking pumpable non-emptiness of colored B\"uchi graphs~\cite{journals/fmsd/KupfermanPV09}.
However, the pumpability condition cannot directly be expressed in the bounded synthesis constraint system.
Hence, in Section~\ref{sec:colored-buchi-graphs}, we give an alternative solution to the non-emptiness of colored B\"uchi graphs by a reduction to B\"uchi graphs that have access to the state space of the transition system.
We use this result to build the semi-algorithm that is presented in Section~\ref{sec:semi-algorith-ag}.

\subsection{Nonemptiness of Colored B\"uchi Graphs} \label{sec:colored-buchi-graphs}

In the case of $\ltl$ specifications, the nonemptiness problem for B\"uchi graphs gives a classical solution to the model checking problem for a given system $\tsys$.
Let $\varphi$ be the $\ltl$ formula that $\tsys$ should satisfy.
In a preprocessing step, the negation of $\varphi$ is translated to a nondeterministic B\"uchi word automaton $\nbw_{\neg \varphi}$~\cite{BaierKatoen08}.
Then $\varphi$ is violated by $\tsys$ if, and only if, the B\"uchi graph $G$ representing the product of $\tsys$ and $\nbw_{\neg\varphi}$ is nonempty.
An accepting path $\pi$ in $G$ witnesses a computation of $\mathcal{S}$ that violates~$\varphi$.
\emph{Colored B\"uchi graphs} are an extension to those graphs in the context of model checking $\prompt$~\cite{journals/fmsd/KupfermanPV09}.

A colored B\"uchi graph of degree two is a tuple $G = \tuple{\set{r,r'},V,E,v_0,L,\Bwin}$ where
$r$ and $r'$ are propositions,
$V$ is a set of vertices,
$E \subseteq V \times V$ is a set of edges,
$v_0 \in V$ is the designated initial vertex,
$L\colon\fun{V}{\pow{\set{r,r'}}}$ describes the color of a vertex, and
$\Bwin = \set{\bwin_1,\bwin_2}$ is a generalized B\"uchi condition of index two, i.e., $\bwin_1, \bwin_2 \subseteq V$.
A B\"uchi graph is a special case where we omit the labeling function and are interested in finding an accepting path.
A path $\pi = v_0 v_1 v_2 \cdots \in V^\omega$ is pumpable, if we can pump all its $r'$-blocks without pumping its $r$-blocks.
Formally, a path is pumpable if for all adjacent $r'$-change points $i$ and $i'$, there are positions $j$, $j'$, and $j''$ such that $i \leq j < j ' < j'' < i'$, $v_j = v_{j''}$ and $r \in L(v_j)$ if, and only if,  $r \notin L(v_{j'})$.
A path $\pi$ is accepting, if it visits both $\bwin_1$ and $\bwin_2$ infinitely often.
The \emph{pumpable nonemptiness} problem for $G$ is to decide whether $G$ has a pumpable accepting path.
It is $\nlogspace$-complete and solvable in linear time~\cite{journals/fmsd/KupfermanPV09}.

We give an alternative solution to this problem based on a reduction to the nonemptiness problem of B\"uchi graphs.
To this end, we construct a non-deterministic safety automaton $\nbw_\text{pump}$ that characterizes the pumpability condition.
Note that an infinite word is accepted by a safety automaton if, and only if, there exists an infinite run on this word.
\begin{lemma} \label{thm:buchi_pumpable}
  Let $\cbgraph = \tuple{\set{r,r'},V,E,v_0,L,\Bwin}$ be a colored B\"uchi graph of degree two.
  There exists a B\"uchi graph $\bgraph'$ with $\bigo(\card{\bgraph'}) = \bigo(\card{\cbgraph}^2)$ such that $\cbgraph$ has a pumpable accepting path if, and only if, $\bgraph'$ has an accepting path.
\end{lemma}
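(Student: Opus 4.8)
The plan is to realize the pumpability condition as a separate observer and to intersect it with $\cbgraph$. Concretely, I will build a nondeterministic safety automaton $\nbw_{\text{pump}}$ reading the vertex sequence of a path (so that the current vertex, and hence both of its colors via $L$, are available at each step), and then take $\bgraph'$ to be the product $\cbgraph \times \nbw_{\text{pump}}$ with the generalized B\"uchi condition $\Bwin$ lifted to the product. Since an infinite word is accepted by a safety automaton exactly when it admits an infinite run, the product will have an accepting path precisely when $\cbgraph$ has a pumpable accepting path.

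The automaton $\nbw_{\text{pump}}$ operates $r'$-block by $r'$-block. It stores the previous $r'$-color (one bit) to detect $r'$-change points, and inside the current block it carries a small \emph{obligation record}: a guessed vertex $v_j$ together with the $r$-color held at the moment $j$ was guessed, and a flag recording whether an $r$-flip has since been observed. At each position the automaton may nondeterministically (i) commit the current position as $j$, storing $v_j$ and its $r$-color; (ii) once $j$ is committed and the current position $j'>j$ exhibits the opposite $r$-color, raise the flag; and (iii) once the flag is raised and the current vertex equals the stored $v_j$ (this is $j''>j'$), declare the obligation \emph{discharged}. Whenever a new $r'$-change point is reached, the automaton checks that the just-finished block discharged its obligation and then resets the record for the next block; if the obligation is still pending at a block boundary, the automaton has no successor. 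The initial state treats position $0$ as an $r'$-change point, matching the convention for change points. Obligations are verified \emph{only at block boundaries}, so a final infinite $r'$-block (when change points are finite) is never forced to discharge, exactly as the definition of pumpability demands, since it quantifies only over \emph{adjacent} $r'$-change points.

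Correctness follows by relating runs to paths. For the forward direction, a pumpable accepting path of $\cbgraph$ supplies, in every finite $r'$-block, witnesses $j<j'<j''$ with $v_j=v_{j''}$ and opposite $r$-colors at $j$ and $j'$; feeding these guesses to $\nbw_{\text{pump}}$ yields an infinite run, and since $\Bwin$ is inherited by the product, the resulting path of $\bgraph'$ visits $\bwin_1$ and $\bwin_2$ infinitely often. For the backward direction, an accepting path of $\bgraph'$ projects onto a path $\pi$ of $\cbgraph$; the safety component never fails, so every finite $r'$-block of $\pi$ discharges its obligation and hence carries the required witnesses, making $\pi$ pumpable, while the B\"uchi component guarantees that $\pi$ is accepting. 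Finally, I will convert the index-two generalized B\"uchi condition on $\bgraph'$ into an ordinary B\"uchi condition by the standard constant-factor construction.

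For the size bound, the only unbounded component of the state of $\nbw_{\text{pump}}$ is the stored vertex $v_j$, so $\nbw_{\text{pump}}$ has $\bigo(\card{V})$ states; the previous-color bit, the flag, the block-mode, and the generalized-to-ordinary conversion contribute only constant factors. Hence $\bgraph'$ has $\bigo(\card{V}^2)$ vertices, giving $\bigo(\card{\bgraph'}) = \bigo(\card{\cbgraph}^2)$. The main obstacle I anticipate is pinning down the automaton's acceptance mode: the crucial point is that safety (rather than, say, a B\"uchi or co-B\"uchi condition) is exactly right because each block's obligation is a bounded, locally checkable requirement whose failure can be detected at the next change point, so that the existence of an infinite run coincides with discharging every completed block---while correctly \emph{not} imposing any requirement on a trailing infinite block.
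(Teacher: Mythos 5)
Your proposal is correct and follows essentially the same route as the paper: a nondeterministic safety automaton of size $\bigo(\card{V})$ that nondeterministically guesses, within each $r'$-block, the witness vertex $v_j$, then watches for an $r$-flip and a repetition of $v_j$ before the next $r'$-change point, composed by product with $\cbgraph$ so that the lifted (generalized) B\"uchi condition characterizes pumpable accepting paths. Your explicit handling of a trailing infinite $r'$-block and the conversion to an ordinary B\"uchi condition are minor presentational additions; the construction, the correctness argument in both directions, and the quadratic size bound all match the paper's proof.
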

\begin{proof}
  We define a non-deterministic safety automaton $\nbw_\text{pump} = \tuple{V \times 2^\set{r,r'},S,s_0,\delta,S}$ over the alphabet $V \times \pow{\set{r,r'}}$ that checks the pumpability condition.
  The product of $\cbgraph$ and $\nbw_\text{pump}$ (defined later) represents the B\"uchi graph $G'$ where every accepting path is pumpable.
  
  The language $\mathcal{L} \subseteq (V \times \pow{\set{r,r'}})^\omega$ of pumpable paths (with respect to a fixed set of vertices $V$) is an $\omega$-regular language that can be recognized by a small non-deterministic safety automaton.
  This automaton~$\nbw_\text{pump}$ operates in 3 phases between every pair of adjacent $r'$-change points:
  first, it non-deterministically remembers a vertex $v$ and the corresponding truth value of $r$.
  Then, it checks that this value changes and thereafter it remains to show that the vertex $v$ repeats before the next $r'$-change point.
  Thus, the state space~$S$ of $\nbw_\text{pump}$ is
  \begin{align*}
    \set{s_0} \cup \Set{s_{v,x} \mid v \in V, x \in 2^\set{r,r'}}
    \cup \Set{s'_{v,y} \mid v \in V, y \in 2^\set{r,r'}}
    \cup \Set{s''_z \mid z \in 2^\set{r'}}
  \end{align*}
  and the initial state is $s_0$.
  The state space corresponds to the 3 phases: In the states $s_{v,x}$ a vertex $v$ and a truth value of $r$ are remembered, before state $s'_{v,y}$ the value of $r$ changes, and $s''_z$ is the state after the vertex repetition.
  The transition function $\delta \colon (S \times (V \times 2^\set{r,r'})) \rightarrow 2^S$ is defined as follows:
  \begin{itemize}
    \item $\delta(s_0,(v,x)) =  \set{ s_{v,x} }$
    \item $\delta(s_{v,x},(v',x')) \ni \begin{cases}
      s_{v,x} & \text{if } x =_\set{r'} x' \\
      s_{v',x'} & \text{if } x =_\set{r'} x' \\
      s'_{v,x'} & \text{if } x =_\set{r'} x' \text{ and } x \neq_\set{r} x'
    \end{cases}$
    \item $\delta(s'_{v,y},(v',x)) \ni \begin{cases}
      s'_{v,y} & \text{if } x =_\set{r'} y \text{ and } v' \neq v \\
      s''_{y \cap \set{r'}} & \text{if } x =_\set{r'} y \text{ and } v = v
    \end{cases}$
    \item $\delta(s''_z, (v,x)) \ni \begin{cases}
      s''_z & \text{if } x =_\set{r'} z \\
      s_{v,x} & \text{if } x \neq_\set{r'} y
    \end{cases}$
  \end{itemize}
  where $A =_C B$ is defined as $(A \cap C) = (B \cap C)$.
  The size of $\nbw_\text{pump}$ is in $O(\card{V})$.
  Figure~\ref{fig:safety_automaton_pumpable} gives a visualization of this automaton.
  
  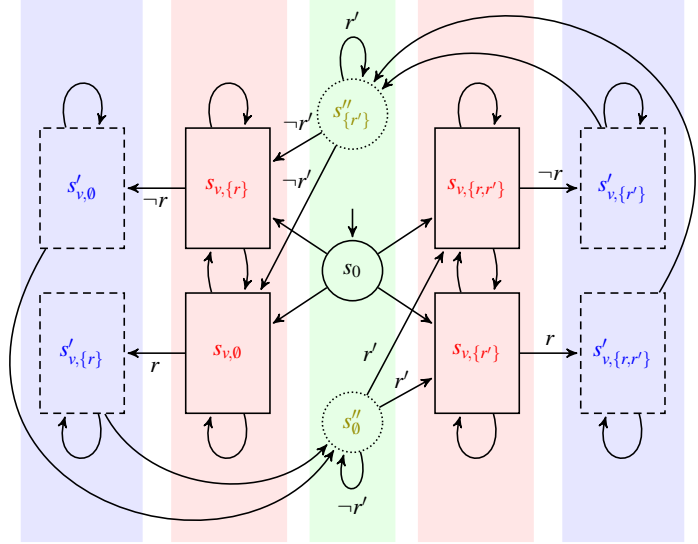
\begin{figure}[h]
    \centering
    \begin{tikzpicture}[->,>=stealth',shorten >=1pt,auto,node distance=1cm,semithick,scale=0.8,transform shape]

  \draw[fill,color=red!10] (-1.1,4.5) rectangle (-3,-4.5);
  \draw[fill,color=red!10] (1.1,4.5) rectangle (3,-4.5);
  
  \draw[fill,color=blue!10] (-3.5,4.5) rectangle (-5.5,-4.5);
  \draw[fill,color=blue!10] (3.5,4.5) rectangle (5.5,-4.5);
  
  \draw[fill,color=green!10] (-0.7,4.5) rectangle (0.7,-4.5);

  \tikzstyle{block}=[state,rectangle,minimum width=1.4cm,minimum height=2cm]
  \tikzstyle{nonblock}=[state,minimum size=1cm]

  \node[nonblock,initial above,initial text=] (init) {$s_0$};
  
  \node[block,above right=0 and 1 of init] (s_qp) {\color{red}$s_{v,\set{r,r'}}$};
  \node[block,below right=0 and 1 of init] (s_q) {\color{red}$s_{v,\set{r'}}$};
  \node[block,above left=0 and 1 of init] (s_p) {\color{red}$s_{v,\set{r}}$};
  \node[block,below left=0 and 1 of init] (s_) {\color{red}$s_{v,\emptyset}$};
  
  \node[block,right=of s_qp,densely dashed] (sp_q) {\color{blue}$s'_{v,\set{r'}}$};
  \node[block,right=of s_q,densely dashed] (sp_qp) {\color{blue}$s'_{v,\set{r,r'}}$};
  \node[block,left=of s_p,densely dashed] (sp_) {\color{blue}$s'_{v,\emptyset}$};
  \node[block,left=of s_,densely dashed] (sp_p) {\color{blue}$s'_{v,\set{r}}$};
  
  \node[nonblock,above=1.5 of init,densely dotted] (spp_q) {\color{olive}$s''_\set{r'}$};
  \node[nonblock,below=1.5 of init,densely dotted] (spp_) {\color{olive}$s''_\emptyset$};
  
  \draw (init) edge (s_qp)
        (init) edge (s_q)
        (init) edge (s_p)
        (init) edge (s_)
        
        (s_qp) edge[bend left=15] (s_q)
        (s_q) edge[bend left=15] (s_qp)
        (s_p) edge[bend left=15] (s_)
        (s_) edge[bend left=15] (s_p)
        
        (s_qp) edge node {$\neg r$} (sp_q)
        (s_q) edge node {$r$} (sp_qp)
        (s_p) edge node {$\neg r$} (sp_)
        (s_) edge node {$r$} (sp_p)
        
        (sp_q) edge[bend right=55] (spp_q)
        (sp_p) edge[bend right=55] (spp_)
        
        (spp_q) edge[loop above] node {$r'$} ()
        (spp_) edge[loop below] node {$\neg r'$} ()
        
        (spp_q) edge node[pos=0,swap,yshift=-3pt] {$\neg r'$} (s_p)
        (spp_q) edge node[pos=0.2,swap,yshift=-10pt] {$\neg r'$} (s_)
        (spp_) edge node[near start,yshift=-5pt] {$r'$} (s_qp)
        (spp_) edge node[pos=0.7,yshift=-5pt] {$r'$} (s_q)
        
        (s_qp) edge[loop above,min distance=10mm] (s_qp)
        (s_p) edge[loop above,min distance=10mm] (s_p)
        (s_q) edge[loop below,min distance=10mm] (s_q)
        (s_) edge[loop below,min distance=10mm] (s_)
        
        (sp_qp) edge[loop below,min distance=10mm] (sp_qp)
        (sp_p) edge[loop below,min distance=10mm] (sp_p)
        (sp_q) edge[loop above,min distance=10mm] (sp_q)
        (sp_) edge[loop above,min distance=10mm] (sp_)        
        ;
  
  \draw (sp_qp) .. controls +(3,5) and +(2,3) .. (spp_q);
  \draw (sp_) .. controls +(-3,-5) and +(-2,-3) .. (spp_);
  
\end{tikzpicture}
    \caption{Schematic visualization of the automaton $\nbw_\text{pump}$ from the proof of Lemma~\ref{thm:buchi_pumpable}. The 3 phases are clearly visible: In the red states {\color{red}$s_{v,x}$} (solid rectangles) the values $(v,x)$ are non-deterministically stored and those states can only be left if there is a change in the value of $r$. The subsequent blue states {\color{blue}$s'_{v,y}$} (dashed rectangles) can only be left in case of a vertex repetition leading to the green state {\color{olive}$s''_z$} (dotted circles) that waits for the next $r'$ change point.}
    \label{fig:safety_automaton_pumpable}
  \end{figure}
  \begin{remark}
    Note that in the context of this proof, it would be enough to remember a vertex $v$ without the valuation of $\set{r,r'}$ as the vertex determines the valuation by the labeling function $L\colon\fun{v}{2^\set{r,r'}}$ of $\cbgraph$.
    However, we will later use $\nbw_\text{pump}$ in a more general setting (cf.~Section~\ref{sec:semi-algorith-ag}).
  \end{remark}
  
  We define the product $G'$ of the colored B\"uchi graph $\cbgraph = \tuple{\set{r,r'},V,E,v_0,L,\Bwin}$ and the automaton~$\nbw_\text{pump}$ as the B\"uchi graph $(V \times S,E',(v_0,s_0),\Bwin')$, where 
  \begin{equation*}
    ((v,s),(v',s')) \in E' \quad\Leftrightarrow\quad (v,v') \in E \land s' \in \delta(s,(v, L(v)))  
  \end{equation*}
  and where $\Bwin' = (\bwin'_1,\bwin'_2)$ is given by  $\bwin'_i = \set{(v,s) \mid v \in \bwin_i \text{ and } s \in S}$ for  $i \in \set{1,2}$.
  The size of $G'$ is in $\bigo(\card{G}^2)$. It remains to show that $G$ has a pumpable accepting path if, and only if, $G'$ has an accepting path.

  Consider a pumpable accepting path $\pi$ in $G$.
  We show that there is a corresponding accepting path $\pi'$ in $G'$.
  Let $i$ and $i'$ be adjacent $r'$-change points.
  Then there are positions $j$, $j'$, and $j''$ such that $i \leq j < j ' < j'' < i'$, $v_j = v_{j''}$ and $r \in L(v_j)$ if, and only if,  $r \notin L(v_{j'})$.
  By construction, at position $i$, automaton $\nbw_\text{pump}$ is some state from the set $\set{s_0,s''_\emptyset,s''_\set{r'}}$.
  We follow the automaton and remember vertex $v$ and the truth value of $r$ at position $j \geq i$ (some state $s_{v,x}$).
  Next, we take the transition to $s'_{v,y}$ where the truth value of $r$ changes (at position $j'$).
  Lastly, we check that there is a vertex repetition (at position $j''$) and go to state $s''_z$.
  At the next $r'$-change point $i'$, the argument repeats.
  This path is accepting, as the original one is accepting.
  
  Now, consider an accepting path $\pi$ in $G'$.
  We show that there is a pumpable accepting path in $G$.
  Let $\pi'$ be the projection of every position of $\pi$ to the first component.
  By construction, $\pi'$ is an accepting path in $G$.
  Let $\pi_i \pi_{i+1} \cdots \pi_{i'}$ be an $r'$-block of $\pi$.
  As $\pi$ has a run on automaton $\nbw_\text{pump}$, we know that there exists a state repetition between $i$ and $i'$ where the truth value of $r$ changes in between.
  Hence, the path $\pi'$ is pumpable. 
\end{proof}

\subsection{A Semi-Algorithm for Assume-Guarantee Realizability} \label{sec:semi-algorith-ag}
As the assume-guarantee realizability problem for asynchronous architectures is undecidable and infinite-state strategies are required in general, we give a semi-decision procedure for the problem, as an extension of the bounded synthesis approach~\cite{journals/sttt/FinkbeinerS13}. 
Based on an $\ltl$ specification $\varphi$, an architecture $\arch$, and a size bound $\bound$, bounded synthesis separately considers the problems of finding a global transition system that satisfies the given specification, and of dividing the transition system into local components according to the given architecture. To this end, two sets of constraints are generated: an encoding of the satisfaction of $\varphi$ by a global transition system $\tsys$ of size $\bound$, and an encoding of the architectural constraints that divides this global system into local components. If the conjunction of both sets of constraints is satisfiable, then a model of the constraints represents a distributed system that satisfies $\varphi$ in $\arch$. Since the architectural constraints we consider are the same as in standard bounded synthesis, we only have to modify the constraints encoding the existence of a global transition system that satisfies the given specification.

In the following, we use the techniques developed in the last subsection to generalize the encoding of the specification from a single $\ltl$ formula $\varphi$ to an assume-guarantee specification $\tuple{\varphi,\psi}$ in $\prompt$. 
Given an assume-guarantee specification $\tuple{\varphi,\psi}$,
we first solve the problem of model-checking assume-guarantee specifications by building a universal co-B\"uchi tree automaton $\uct$ that accepts a transition system $\tsys$ if, and only if, $\tsys$ satisfies $\tuple{\varphi,\psi}$.
From $\uct$ and a given bound $\bound$, we then build a constraint system that is satisfiable if, and only if, an implementation $\tsys$ of $\tuple{\varphi,\psi}$ with size $\bound$ exists. Finally, the encoding of architectural constraints can be adopted without changes from the original approach to obtain a conjunction of constraints that is satisfiable if, and only if, there is a system of size $\bound$ that satisfies $\tuple{\varphi,\psi}$ in $\arch$.

\paragraph{Encoding $\tuple{\varphi,\psi}$ into B\"uchi automata.}
Let $\arch^* = \tuple{P,\penv,\{I^*_p\}_{p \in P}, \{O^*_p\}_{p \in P}}$ be  an asynchronous architecture and let $I = O^*_{\penv}$ and $O = \bigcup_{p \in P^-} O^*_p $ be the set of inputs, respectively outputs, of the composition of the system processes.
First, we construct the non-deterministic B\"uchi  automaton $\nbw_{\overline{c}_{r'}(\psi) \land c_r(\varphi)} = \tuple{2^{I \cup O  \cup \set{r,r'}}, Q, q_0, \delta, \bwin}$, where $\overline{c}_{r'}(\psi) = \alt_{r'} \land \neg\rel_{r'}(\psi)$ whose language contains exactly those paths that satisfy ${\overline{c}_{r'}(\psi) \land c_r(\varphi)}$~\cite{BaierKatoen08}.
\begin{lemma}[cf.~Theorem~6.2 of \cite{journals/fmsd/KupfermanPV09}] \label{thm:model checking-asynchronous}
  Let $\tsys$ be a $2^O$-labeled $2^I$-transition system.
  Then $\tsys$ does not satisfy $\tuple{\psi,\varphi}$ if, and only if, the product of $\tsys$ and $\nbw_{\overline{c}_{r'}(\psi) \land c_r(\varphi)}$ is pumpable non-empty.
\end{lemma}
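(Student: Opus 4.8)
The lemma characterizes when a transition system $\tsys$ fails to satisfy the assume-guarantee specification $\tuple{\varphi,\psi}$ (note the statement writes $\tuple{\psi,\varphi}$, but the intended pair is guarantee $\psi$ under assumption $\varphi$). Recall the assume-guarantee semantics: $\tsys$ satisfies $\tuple{\varphi,\psi}$ iff for every bound $k$ there is a bound $l$ such that on every path, $(w,k)\models\varphi$ implies $(w,l)\models\psi$. So the negation says: there is a bound $k$ such that for all $l$, some path $w$ satisfies $(w,k)\models\varphi$ but $(w,l)\not\models\psi$. I need to connect this quantifier-alternating condition to pumpable nonemptiness of a product colored Büchi graph built from $\nbw_{\overline{c}_{r'}(\psi)\land c_r(\varphi)}$.

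**The plan.** The key idea is the alternating color technique applied with *two independent colors*: color $r$ relativizes the assumption $\varphi$ (via $c_r$), and a second color $r'$ relativizes the negated guarantee (via $\overline{c}_{r'}(\psi)=\alt_{r'}\land\neg\rel_{r'}(\psi)$). The two-colored Büchi graph of degree two from Section~\ref{sec:colored-buchi-graphs} is precisely designed for this: pumpability is required for the $r'$-blocks (witnessing that the guarantee fails for *arbitrarily large* $l$, matching the $\forall l$), while the $r$-blocks stay bounded (witnessing that the assumption holds for a *fixed* $k$, matching the $\exists k$). First I would form the product of $\tsys$ with $\nbw_{\overline{c}_{r'}(\psi)\land c_r(\varphi)}$ and observe this is a colored Büchi graph of degree two with the coloring $L$ given by the truth values of $r$ and $r'$ along paths, so that "pumpable nonempty" is well-defined.

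**The two directions.** For the forward direction, suppose $\tsys$ does not satisfy $\tuple{\varphi,\psi}$. Then there is a fixed $k$ witnessing the failure: for every $l$ there is a path with $(w,k)\models\varphi$ but $(w,l)\not\models\psi$. Using Lemma~\ref{lemma_alternatingcolor}.\ref{lemma_alternatingcolor_pltltoltl} I can $k$-space the $r$-coloring so that $c_r(\varphi)$ holds; simultaneously, because the guarantee fails for every $l$, I can build an $r'$-coloring whose blocks grow without bound, so the dual relativization $\overline{c}_{r'}(\psi)$ is satisfied and the $r'$-blocks are pumpable (a repeated vertex with an interior $r$-flip exists inside each block, since the block is long while the graph is finite). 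This yields a pumpable accepting path in the product. For the converse, a pumpable accepting path satisfies $c_r(\varphi)$ with bounded $r$-blocks (bounded by the graph size), so by Lemma~\ref{lemma_alternatingcolor}.\ref{lemma_alternatingcolor_ltltopltl} the underlying path satisfies $\varphi$ for a fixed $2k$; and pumpability of the $r'$-blocks lets me pump them to refute $\psi$ for every $l$ while preserving acceptance and the $r$-structure. Thus the fixed $k$ together with this family of paths witnesses the failure of the assume-guarantee property.

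**Main obstacle.** The delicate point is the quantifier alternation: I must ensure the $r$-coloring stays *bounded* (fixed $k$) while the $r'$-coloring is *pumpable* (arbitrarily large $l$), and that these two colorings can be chosen independently on the same path without interfering. This is exactly why two separate colors and the asymmetric pumpable-nonemptiness condition (pump $r'$-blocks, keep $r$-blocks fixed) are needed, and why Lemma~\ref{thm:buchi_pumpable} was set up to remember a vertex between adjacent $r'$-change points while detecting an $r$-flip. I would lean on Lemma~\ref{thm:buchi_pumpable} to reduce pumpability to ordinary Büchi nonemptiness and cite Theorem~6.2 of \cite{journals/fmsd/KupfermanPV09} for the single-color skeleton, adapting it to the two-color assume-guarantee setting; the new content is the bookkeeping that keeps the assumption bound $k$ uniform across the family of pumped paths.
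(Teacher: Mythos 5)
The paper does not actually prove this lemma; it imports it from Theorem~6.2 of \cite{journals/fmsd/KupfermanPV09}, so your proposal has to be compared against that cited argument. Your reconstruction matches it in all essentials: the negated assume-guarantee condition is $\exists k\, \forall l\, \exists w$ with $(w,k)\models\varphi$ and $(w,l)\nmodels\psi$; the color $r$ tracks the assumption with blocks kept bounded (one fixed $k$, via Lemma~\ref{lemma_alternatingcolor}), the color $r'$ tracks the negated guarantee with blocks that can be pumped (all $l$), and pumpability is extracted by pigeonhole from sufficiently long $r'$-blocks. You also correctly flag the $\tuple{\psi,\varphi}$ versus $\tuple{\varphi,\psi}$ swap as a notational inconsistency in the statement rather than a mathematical issue.

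One step in your forward direction is imprecise. You propose ``an $r'$-coloring whose blocks grow without bound'' and claim $\overline{c}_{r'}(\psi)$ follows. As stated this does not go through: the hypothesis hands you a possibly different witness path $w_l$ for each $l$, so there is no single word on which you are entitled to let the $r'$-blocks grow unboundedly while still knowing $\neg\rel_{r'}(\psi)$ holds; moreover $\neg\rel_{r'}(\psi)$ is obtained from $(w,l)\nmodels\psi$ via the contrapositive of Lemma~\ref{lemma_alternatingcolor}.\ref{lemma_alternatingcolor_ltltopltl}, which requires the $r'$-coloring to be \emph{bounded}. The correct move is to fix one $l$ that is large relative to the size of the product graph and to $k$, take the single witness $w_l$, give it a $k$-spaced and $k$-bounded $r$-coloring and an $r'$-coloring with blocks of uniform length $\lfloor l/2\rfloor$; then $c_r(\varphi)\wedge\overline{c}_{r'}(\psi)$ holds, and each $r'$-block contains enough $r$-change points that pigeonhole on the product vertices yields the required repeated vertex with an interior $r$-flip. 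With that repair (and the symmetric care in the converse direction, which you already describe correctly: pumping preserves the $r$-block bound because every pumped cycle contains an $r$-flip), your argument is the intended one.
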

To check the existence of pumpable error paths, we use the non-deterministic automaton~$\nbw_\text{pump} = \tuple{V \times 2^\set{r,r'},S,s_0,\delta',S}$ from the proof of Lemma~\ref{thm:buchi_pumpable}. Here, we let $V = X \times Q$, where $X$ is a set with $\bound$ elements, representing the state space of the desired solution $\tsys$, and $Q$ is the state space of the automaton~$\nbw_{\overline{c}_{r'}(\psi) \land c_r(\varphi)}$ defined above. 
That is, we use as $V$ the state space $X \times Q$ of the colored B\"uchi graph that is used to model check an implementation $\tsys$ against a specification $\tuple{\psi,\varphi}$. 

The product of $\nbw_{\overline{c}_{r'}(\psi) \land c_r(\varphi)}$ and $\nbw_\text{pump}$ is an automaton $\nbw$ that operates on the inputs $I$, outputs $O$, propositions $\set{r,r'}$, and the state space $X$ of the implementation, and accepts all those paths that are pumpable and violate the assume-guarantee specification (cf. Lemma~\ref{thm:buchi_pumpable}). 

$\nbw$ is defined as 
$$\tuple{2^{I  \cup O \cup \set{r,r'}} \times X, Q \times S, (q_0,s_0),\delta^*,\bwin^*},$$
where $\delta^* \colon Q \times S \times 2^{I \cup O \cup \set{r,r'}} \times \set{x} \rightarrow 2^{Q \times S}$ is defined as 
$$\delta^*((q,s),(\sigma,x)) = \left\{ (q',s') ~\mid~ q' \in \delta(q, \sigma) ~\wedge~ s' \in \delta'(s, \set{q,x} \cup (\sigma\cap\set{r,r'})) \right\},$$
and $\bwin^*$ is the B\"uchi condition $\set{(q,s) \mid q \in \bwin, s \in S}$.

We complement $\nbw$, resulting in a universal co-B\"uchi automaton $\ucw$ that accepts a given sequence $w \in (2^{I \cup  \set{r,r'}})^\omega$ of inputs and the behavior of an implementation $\tsys$ on $w$ iff the execution of $\tsys$ on $w$ satisfies $\tuple{\psi,\varphi}$. Finally, we construct a universal co-B\"uchi tree automaton $\uct = (2^O \times X, 2^{I \cup \set{r,r'}},Q,q_0,\delta,\cobwin)$ by spanning a copy of $\ucw$ for every direction in $2^{I \cup \set{r,r'}}$.
Then, an implementation $\tsys$ is accepted by $\uct$ if, and only if, $\tsys$ satisfies $\tuple{\varphi,\psi}$ (for all possible input sequences). Thus,  
$\uct$ solves the problem of model-checking assume-guarantee specifications.

\paragraph{Encoding the automaton into constraints.}
Now, we use a slightly modified bounded synthesis algorithm~\cite{journals/sttt/FinkbeinerS13} to encode $\uct$ into a set of constraints in a first-order theory with uninterpreted functions and a total order, such that the constraints are satisfiable iff there exists an implementation $\tsys$ that satisfies $\tuple{\varphi,\psi}$. The main difference to the existing approach is that the specification automaton $\uct$ has access to the states of the implementation $\tsys$. This is not a problem, since the generated constraints explicitly refer to the state space of $\tsys$ anyway. The original proof of correctness can be used with minor modifications to obtain the following corollary.

\begin{corollary}
Given an assume-guarantee specification $\tuple{\varphi,\psi}$ and a bound $b$, there is a constraint system (in a decidable first-order theory) that is satisfiable if, and only if, there exist an implementation $\tsys$ of size~$b$ such that $\tsys$ satisfies $\tuple{\varphi,\psi}$.
\end{corollary}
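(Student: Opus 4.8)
The plan is to lift the standard bounded synthesis encoding of Finkbeiner and Schewe to the universal co-Büchi tree automaton $\uct$ constructed immediately above. Recall that in classical bounded synthesis one fixes a candidate state space $X$ of size $\bound$ and encodes, using uninterpreted functions over $X$ together with an auxiliary annotation function into a well-founded order, the statement that the product of the guessed transition system $\tsys$ with the universal co-Büchi automaton has no rejecting run, i.e.\ no run visiting the rejecting set infinitely often. The soundness and completeness of this encoding rests on the fact that a universal co-Büchi automaton rejects a tree exactly when every branch of the product admits a valid ranking/annotation that decreases along rejecting states, so that an accepting annotation exists iff the implementation of size $\bound$ is accepted. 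My first step is therefore to instantiate this exact template with $\uct$ in place of the standard specification automaton.

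The one genuine novelty that must be handled is that $\uct$, unlike a standard specification automaton, reads not only the labels in $2^{I \cup \set{r,r'}}$ but also the current \emph{state} component $x \in X$ of the implementation (this is inherited from $\nbw_\text{pump}$, which needs the vertex $v = (x,q)$ of the colored Büchi graph to detect vertex repetitions). The key observation is that this causes no difficulty: the bounded synthesis constraints already quantify explicitly over $X$ and have variables naming the current state of $\tsys$ at each point, so the transition relation $\delta^*$ of $\uct$ — which depends on $(q,s)$, the input/output/color letter $\sigma$, and $x$ — can be encoded by uninterpreted-function constraints in exactly the same way as an ordinary $\delta$, simply passing the already-available $x$ as an extra argument. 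Concretely, I would introduce the annotation function $\lambda \colon X \times (Q \times S) \to \nats$ (a natural-number rank) and a reachability predicate, and write the standard two families of constraints: a consistency constraint stating that reachable product states are annotated, and a well-foundedness constraint stating that the rank strictly decreases when passing through a rejecting state of $\cobwin$ and does not increase otherwise, with $\sigma$ and $x$ plugged into $\delta^*$. The architectural constraints splitting the global $\tsys$ into local process implementations $f_p$ are transplanted verbatim from the original approach, as the surrounding text already notes.

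For correctness I would invoke the established bounded synthesis correctness theorem as a black box: a valid annotation into a well-founded order exists for the product of a size-$\bound$ system and a universal co-Büchi tree automaton if, and only if, that system is accepted by the automaton. Combined with the property of $\uct$ proved just above — namely that $\tsys$ is accepted by $\uct$ iff $\tsys$ satisfies $\tuple{\varphi,\psi}$ — this yields directly that the constraint system is satisfiable iff there is an implementation $\tsys$ of size $\bound$ satisfying $\tuple{\varphi,\psi}$. Decidability of the resulting first-order theory (uninterpreted functions with a total order over the finite rank domain, or equivalently a bounded arithmetic fragment) follows as in the original framework, since the rank domain can be bounded by the product state-space size.

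The main obstacle, and the only place a careful argument is needed, is verifying that feeding the implementation state $x$ into the specification automaton does not break the soundness/completeness of the ranking argument. I would argue this by noting that the composed object is still an ordinary universal co-Büchi tree automaton whose \emph{input alphabet} has been enlarged to include the $X$-component of the node label; since the bounded synthesis correctness proof is stated for arbitrary universal co-Büchi tree automata over arbitrary finite directions and labels, it applies unchanged, and the fact that the extra label happens to coincide with the system's own state is irrelevant to the soundness of the annotation. This is exactly the ``minor modification'' the surrounding text alludes to, and I expect the bulk of the work to be bookkeeping: checking that every occurrence of $\delta$ in the original constraint schema is replaced by $\delta^*$ with the correct extra $x$-argument, and that the rejecting set $\cobwin$ is the complement of $\bwin^*$.
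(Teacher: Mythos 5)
Your proposal follows essentially the same route as the paper: the paper likewise obtains this corollary by instantiating the standard bounded synthesis encoding with the automaton $\uct$, observing that the only deviation from the original framework is that $\uct$ reads the implementation's state, and that this is harmless because the generated constraints already refer explicitly to the state space of $\tsys$. Your write-up is in fact more detailed than the paper's (which invokes the original correctness proof ``with minor modifications''), and your justification for why feeding $x$ into the specification automaton preserves soundness --- viewing it as an enlargement of the node-label alphabet --- is a correct and welcome elaboration of exactly that point.
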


\paragraph{Encoding of architectural constraints.}
As mentioned above, the encoding of architectural constraints can be adopted without changes, and it can in particular also contain additional bounds on the state space of every single component the conjunction of both sets of constraints then asks for the existence of a distributed implementation $\tsys = \Distprod_{p \in P^-} f_p$ of size $\bound$ that satisfies $\tuple{\varphi,\psi}$, possibly with additional bounds $b_{p}$ for every $p \in \pminus$ on the size of the components. Thus, we obtain:

\begin{theorem}
  Given an assume-guarantee specification $\tuple{\varphi,\psi}$, an asynchronous architecture $\arch^*$, and a family of bounds $b_{p}$ for every $p \in \pminus$, there is a constraint system (in a decidable first-order theory) that is satisfiable if, and only if, there exist implementations $f_{p}$ of size~$b_p$ for every $p \in \pminus$ such that $\Distprod_{p \in P^-} f_p$ satisfies $\tuple{\varphi,\psi}$ in $\arch^*$.
\end{theorem}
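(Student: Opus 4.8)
The plan is to combine the two encodings that have just been constructed: the specification encoding from the preceding corollary, which handles a single global transition system, and the architectural encoding inherited unchanged from standard bounded synthesis~\cite{journals/sttt/FinkbeinerS13}. Since both sets of constraints live in the same decidable first-order theory with uninterpreted functions and a total order, the overall constraint system is simply their conjunction, and satisfiability of a conjunction is again a property in that theory. First I would instantiate the specification encoding of the preceding corollary with $\bound = \sum_{p \in \pminus} b_p$ (or with $\bound$ the product of the component bounds, depending on how the global state space $X$ is taken to factor through the components), so that the universal co-B\"uchi tree automaton $\uct$ has access to the right global state space $X$. The satisfiability of that part alone already characterizes the existence of a global implementation $\tsys$ of the appropriate size that satisfies $\tuple{\varphi,\psi}$.

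Next I would recall the architectural constraints of~\cite{journals/sttt/FinkbeinerS13}. These encode that the global transition system $\tsys$ decomposes into local strategies $f_p \colon \strat{I^*_p}{O^*_p}$ respecting the information available to each process $p \in \pminus$ in $\arch^*$: each local transition and each local labeling may depend only on the inputs $I^*_p$ visible to $p$, and the component sizes are bounded by the respective $b_p$. Crucially, these constraints are purely structural---they constrain the \emph{shape} of the uninterpreted functions representing the transition relation and labeling, independently of which specification is being checked---so, as noted in the paragraph above the statement, they carry over verbatim; I do not need to re-derive them. The key point to spell out is that a model of the architectural constraints yields, by projection, the finite-state implementations $f_p$ whose distributed product $\Distprod_{p \in P^-} f_p$ equals the global system $\tsys$ referred to by the specification constraints, and conversely any family of valid $f_p$ induces such a global $\tsys$.

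The soundness-and-completeness argument then proceeds in both directions. If the conjoined constraint system is satisfiable, a model simultaneously satisfies the specification part---so by the previous corollary the induced global $\tsys$ satisfies $\tuple{\varphi,\psi}$---and the architectural part---so $\tsys$ decomposes as $\Distprod_{p \in P^-} f_p$ with each $f_p$ of size $b_p$ and respecting the incomplete information of $\arch^*$. Conversely, given implementations $f_p$ of size $b_p$ whose product satisfies $\tuple{\varphi,\psi}$ in $\arch^*$, their distributed product $\tsys = \Distprod_{p \in P^-} f_p$ witnesses satisfaction of the specification constraints, and its decomposition witnesses the architectural constraints, so the conjunction is satisfiable. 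Decidability of the resulting theory is inherited directly from the bounded synthesis framework.

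The main obstacle I anticipate is not the conjunction itself but ensuring that the global state space $X$ over which the specification automaton $\uct$ operates is consistent with the component-wise decomposition demanded by the architectural constraints: the specification encoding from Lemma~\ref{thm:buchi_pumpable} and the subsequent construction gives $\nbw_\text{pump}$ access to a global state space $X$, whereas the distributed product $\Distprod_{p \in P^-} f_p$ has a state space that is a product of the local ones. I would need to verify that the encoding of~\cite{journals/sttt/FinkbeinerS13} already reconciles these---namely that the uninterpreted function symbols for $\tsys$ are exactly those constrained to factor through the components---so that no additional glue constraints are required and the two sets of constraints genuinely refer to the same objects. Once this identification is in place, the remaining argument is the routine bidirectional unfolding sketched above, and I would keep it brief.
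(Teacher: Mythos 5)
Your proposal is correct and follows essentially the same route as the paper, which likewise obtains the theorem by conjoining the specification constraints of the preceding corollary with the unchanged architectural constraints of bounded synthesis~\cite{journals/sttt/FinkbeinerS13}; the consistency issue you flag about the global state space $X$ is exactly what the paper dispatches by noting that the generated constraints explicitly refer to the state space of $\tsys$ anyway. Your bidirectional unfolding is more detailed than the paper's one-line justification, but the underlying argument is the same.
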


By exhaustively traversing the space of bounds~$(b_{p})_{p \in \pminus}$ and by solving the resulting constraint system as in the previous theorem, we obtain a semi-algorithm for the asynchronous assume-guarantee realizability problem for $\prompt$. Furthermore, this also solves the synthesis problem, as implementations are efficiently obtained from a satisfying assignment of the constraint system. 

\begin{corollary}
  Let $\arch$ be an asynchronous architecture.
  The $\prompt$ distributed assume-guarantee realizability problem for $\arch$ is semi-decidable.
\end{corollary}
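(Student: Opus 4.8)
The plan is to read off the semi-decision procedure directly from the preceding theorem by searching over all possible component sizes. First I would fix an effective (dovetailing) enumeration of all families $(b_p)_{p \in \pminus}$ of natural-number bounds; since $\pminus$ is finite, these families form a countable set and can be enumerated effectively. For the current family in the enumeration, I would apply the preceding theorem to construct the associated constraint system and then decide its satisfiability. This last step is effective precisely because the theorem guarantees that the constraints lie in a decidable first-order theory (uninterpreted functions together with a total order, over a domain made finite by the fixed component sizes). If the constraint system is satisfiable, the procedure halts and reports that $\tuple{\varphi,\psi}$ is realizable in $\arch^*$, extracting the witnessing implementations $f_p$ from a satisfying assignment; otherwise it proceeds to the next family.

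For correctness I would argue soundness and completeness separately. Soundness is immediate from the theorem: a satisfying assignment of the constraint system for the family $(b_p)_{p \in \pminus}$ yields implementations $f_p$ of size $b_p$ whose distributed product $\Distprod_{p \in \pminus} f_p$ satisfies $\tuple{\varphi,\psi}$ in $\arch^*$, so every positive answer is correct. For completeness, suppose the instance is realizable. By definition of the asynchronous assume-guarantee realizability problem there are finite-state implementations $f_p$ realizing $\tuple{\varphi,\psi}$, and by definition each $f_p$ is generated by some transition system of a finite size $b_p$. The family $(b_p)_{p \in \pminus}$ therefore occurs at a finite position in the enumeration, and at that position the constraint system is satisfiable by the theorem; hence the procedure halts with a positive answer after finitely many steps. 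Thus the procedure terminates exactly on the realizable instances, which is precisely what semi-decidability requires.

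Because all the conceptual work is already contained in the preceding development, the only genuine requirement is that each individual satisfiability check be effective; this is exactly the content of the preceding theorem, which reduces the bounded problem to a decidable theory once the component sizes are fixed, so I expect no real obstacle beyond stating the enumeration carefully. I would finally remark that semi-decidability is the best one can hope for: the asynchronous $\ltl$ realizability problem is already undecidable and a fortiori so is its $\prompt$ generalization, which rules out a terminating procedure on the negative instances; the earlier lemma exhibiting assume-guarantee specifications that require infinite memory further explains why the search over component sizes cannot be bounded a priori and thereby turned into a full decision procedure.
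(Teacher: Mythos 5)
Your proposal is correct and matches the paper's argument: the corollary is obtained exactly by exhaustively enumerating the families of bounds $(b_p)_{p \in \pminus}$ and checking satisfiability of the constraint system from the preceding theorem, with soundness and completeness following as you describe from the theorem and the finite-state requirement in the definition of realizability. Your additional remarks on why the search cannot be bounded a priori are consistent with the paper's discussion but not needed for the proof itself.
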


\section{Conclusion}

In this paper, we have initiated the investigation of distributed synthesis for parameterized specifications, in particular for $\prompt$. This logic subsumes $\ltl$, but additionally allows to express bounded satisfaction of system properties, instead of only eventual satisfaction. To the best of our knowledge, this is the first treatment of $\prompt$ specifications in distributed synthesis.

We have shown that for the case of synchronous distributed systems, we can reduce the $\prompt$ synthesis problem to an $\ltl$ synthesis problem. Thus, the complexity of $\prompt$ synthesis corresponds to the complexity of $\ltl$ synthesis, and the $\prompt$ realizability problem is decidable if, and only if,  the $\ltl$ realizability problem is decidable. For the case of asynchronous distributed systems with multiple components, the $\prompt$ realizability problem is undecidable, again corresponding to the result for $\ltl$. For this case, we give a semi-decision procedure based on a novel method for checking emptiness of two-colored B\"uchi~graphs. All these results also hold for $\pltl$ and the even stronger logics from~\cite{journals/corr/FaymonvilleZ14,DBLP:journals/corr/Zimmermann15a}, as they have the exponential compilation property and as the alternating coloring technique is applicable to these logics as well.

Among the problems that remain open is realizability of $\prompt$ specifications in asynchronous distributed systems with a single component. This problem can be reduced to the (single-process) assume-guarantee realizability problem for $\prompt$, which was left open in~\cite{journals/fmsd/KupfermanPV09}.

\bibliographystyle{eptcs}
\bibliography{main}

\end{document}